\theoremstyle{definition}
\newtheorem{definition}{Definition}
\newtheorem{assumption}{Assumption}
\newtheorem{example}{Example}
\theoremstyle{remark}
\newcommand{\Dobs}{\mathcal{D}_{\text{obs}}}
\newcommand{\Dexp}{\mathcal{D}_{\text{exp}}}
\newcommand{\nobs}{n_{\text{obs}}}
\newcommand{\nexp}{n_{\text{exp}}}
\newcommand{\corr}{\text{Corr}}
\newcommand{\lambdas}{\lambda^*}
\newcommand{\tauh}{\hat{\tau}}
\newcommand{\sigmah}{\hat{\sigma}}
\newcommand{\sigmau}{\sigma_u}
\newcommand{\sigmab}{\sigma_b}
\newcommand{\sigmaus}{{\sigma}^2_u}
\newcommand{\sigmabs}{{\sigma}^2_b}
\newcommand{\sigmabh}{\hat{\sigma}_b}
\newcommand{\sigmabhs}{\hat{\sigma}^2_b}
\newcommand{\sigmauhs}{\hat{\sigma}^2_u}
\newcommand{\sigmacov}{\sigma_{bu}}
\newcommand{\sigmacovh}{\hat{\sigma}_{bu}}
\newcommand{\thetahlh}{\hat{\theta}_{\hat{\lambda}}}
\newcommand{\tz}{\theta_0}
\newcommand{\tbh}{\thetah_b}
\newcommand{\tuh}{\thetah_u}
\newcommand{\tb}{\theta_b}
\newcommand{\thh}{\hat{\theta}}
\newcommand{\thl}{\hat{\theta}_{\lambda}}
\newcommand{\thlh}{\hat{\theta}_{\hat{\lambda}}}
\newcommand{\infu}{\phi_u}
\newcommand{\infb}{\phi_b}
\newcommand{\infuh}{\hat{\phi}_u}
\newcommand{\infbh}{\hat{\phi}_b}
\newcommand{\psiu}{\psi_u}
\newcommand{\psib}{\psi_b}
\newcommand{\tbhk}{\tbh^{(k)}}
\newcommand{\thetahlhk}{\thetahlh^{(k)}}
\title{Understanding the Risks and Rewards of Combining Unbiased and Possibly Biased Estimators, with Applications to Causal Inference}
\date{May 24th, 2023\footnote{A previous version of this working paper titled \enquote{Bias-robust Integration of Observational and Experimental Estimators} was presented at the American Causal Inference Conference 2022.}}
\author[1,2]{Michael Oberst}
\author[3]{Alexander D'Amour}
\author[3]{Minmin Chen}
\author[3]{Yuyan Wang}
\author[1,2]{David Sontag}
\author[3]{Steve Yadlowsky}
\affil[1]{MIT CSAIL}
\affil[2]{MIT Institute for Medical Engineering \& Science}
\affil[3]{Google DeepMind}
\begin{document}
\maketitle

\begin{abstract}
Several problems in statistics involve the combination of high-variance unbiased estimators with low-variance estimators that are only unbiased under strong assumptions. A notable example is the estimation of causal effects while combining small experimental datasets with larger observational datasets.  There exist a series of recent proposals on how to perform such a combination, even when the bias of the low-variance estimator is unknown.

To build intuition for the differing trade-offs of competing approaches, we argue for examining the finite-sample estimation error of each approach as a function of the unknown bias. This includes understanding the bias threshold --- the largest bias for which a given approach improves over using the unbiased estimator alone. Though this lens, we review several recent proposals, and observe in simulation that different approaches exhibits qualitatively different behavior.

We also introduce a simple alternative approach, which compares favorably in simulation to recent alternatives, having a higher bias threshold and generally making a more conservative trade-off between best-case performance (when the bias is zero) and worst-case performance (when the bias is adversarially chosen). More broadly, we prove that for any amount of (unknown) bias, the MSE of this estimator can be bounded in a transparent way that depends on the variance / covariance of the underlying estimators that are being combined. 
\end{abstract}

\section{Introduction}%
\label{sec:introduction}

We consider estimation of a real-valued quantity $\tz \in \R$ with small mean-squared error (MSE), in settings where we have access to both an unbiased estimator $\tuh$, typically with high-variance, and a possibly biased estimator $\tbh$ with low variance but unknown bias. This problem arises in a variety of settings in causal inference, as illustrated here:\footnote{A detailed treatment of~\cref{ex:combining_obs_exp_estimates,ex:combining_surrogates_and_primary_outcomes} is given in~\cref{sec:combining_ate_estimates_derived_from_observational_and_experimental_samples,sec:combining_surrogates_and_primary_outcomes_in_experimental_data}.}

\begin{example}[Combining observational and experimental estimates]\label{ex:combining_obs_exp_estimates}
  Given a randomized control trial (RCT) with a known probability of treatment, we can construct an unbiased estimator $\tuh$ under minimal assumptions. However, this estimator may have large variance / mean-squared error, if the sample size is small.  Given a larger observational dataset with the same treatment, we can often construct a lower-variance estimator $\tbh$ of the treatment effect.  However, this estimator will only be consistent under strong causal assumptions (e.g., no unmeasured confounding).
\end{example}

\begin{example}[Augmenting RCTs with observational controls]\label{ex:augment_rct_obs_controls}
  For a novel treatment, there may not be any treated units in observational data outside the RCT\@.  We can still use observational data to augment the RCT with additional control units, by e.g., using matching methods to choose observational controls that are similar to treated units in the RCT\@.  This will typically yield a lower-variance estimator $\tbh$ of the treatment effect, but a violation of relevant causal assumptions will lead to an unknown degree of bias.
\end{example}

\begin{example}[Making use of short-term surrogate outcomes]\label{ex:combining_surrogates_and_primary_outcomes}
  There are other ways to use observational data, even for a novel treatment not observed outside the RCT\@.  For instance, short-term surrogate outcomes $S$ may be available in the experimental data, which are believed to mediate the effect of treatment on the primary outcome of interest $Y$.  Under strong causal assumptions, one can use an estimator of $\E[Y \mid S]$ (e.g., trained on a large observational dataset) as a lower-variance version of $Y$ in the RCT, to construct an estimator $\tbh$ that is consistent if the assumptions hold, and otherwise potentially biased.
\end{example}

In all of the examples above, $\tbh$ is \textbf{potentially biased}, due to potential violations of our causal assumptions, but may have \textbf{lower variance} than the unbiased estimator. Our focus in this work is estimation\footnote{Given our focus on estimation, we do not consider questions of inference based on asymptotic properties (e.g., the construction of confidence intervals).}, with the goal of constructing an estimator $\thh$ of $\tz$ that has small mean-squared error $\E[(\thh - \tz)^2]$.  In this context, we hope to make a bias-variance trade-off by combining $\tbh$ and $\tuh$ in some way that performs better than using $\tuh$ alone. 

To this end, there are several recent proposals for adaptively combining unbiased and potentially biased estimators $\tuh$ and $\tbh$, often motivated by the combination of observational and experimental data (\cref{sec:motivation_and_setup}).  
We refer to these proposals as \textit{combined estimators}, as they represent strategies for combining estimators $\tuh, \tbh$ to form a new estimator.
These proposals implement intuitive heuristics, including adaptive linear combination strategies that approximate an optimal linear combination of $\tuh$ and $\tbh$ \citep{Cheng2021-sn}; hypothesis testing strategies that combine $\tuh$ and $\tbh$ only if a test fails to reject the hypothesis that $\tuh$ and $\tbh$ have the same mean \citep{Yang2020-na}; and soft-thresholding strategies that lie somewhere in between \citep{Chen2021-eo}.
Each of these strategies has been shown to have some favorable properties in certain asymptotic regimes.
However, there is some ambiguity about how these properties and regimes relate to one another, and what the implications are for practice in finite samples.

In this paper, we introduce a perspective that provides a more complete, unified view of these adaptive combination strategies.
A central motivation for our framing is that, while combined estimators can be advantageous in some settings, no combined estimator can dominate the unbiased estimator $\tuh$ alone under all levels of unknown bias (an issue that we explore in \cref{sec:impossibilities}).
Thus, when studying adaptive combined estimators, the key considerations are \emph{under what bias levels} the combined estimator's performance compares favorably to the unbiased one, as well as the risk / reward of using the combined estimator in an unfavorable / favorable regime.

\begin{figure}[t]
  \centering
  \includegraphics[width=0.35\linewidth]{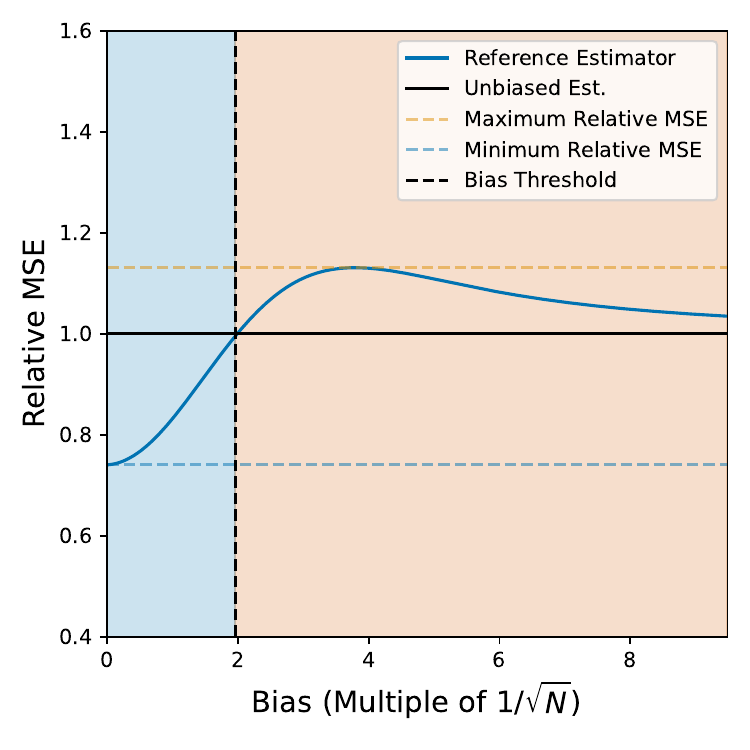}
  \caption{
    A \enquote{performance curve} for the reference estimator introduced in~\cref{sec:our_estimator}. 
    The y-axis shows the relative MSE\@: the ratio of the mean-squared error (MSE) of the reference estimator to the MSE of the unbiased estimator. The x-axis shows the bias $\mu$ of the biased estimator. 
    The curve highlights the best-case performance (minimum relative MSE), worst-case performance (maximum relative MSE), and bias threshold, with a large-bias rMSE limit of 1 as $\mu / \srn \rightarrow \infty$. 
    In this example, the biased $(\tbh)$ and unbiased $(\tuh$) estimators are independent with the same variance and sample size. Further details given in~\cref{sec:empirical}. 
  }%
  \label{fig:intro_example_curve}
\end{figure}

To study these properties, we introduce the \enquote{Performance Curve} of a combined estimator as a central tool for analysis.
Briefly, a performance curve charts how an estimator performs (in this paper, in terms of MSE) across different levels of unknown bias in an otherwise fixed setting.
\cref{fig:intro_example_curve} illustrates such a curve for a particular combined estimator.
Performance curves highlight several important practical properties of combined estimators: the best-case and worst-case performance, the large-bias limit, and the ``bias threshold'', or the level of bias above which the combined estimator under-performs the unbiased estimator $\tuh$ alone.
Previous analytical frameworks have addressed some parts of this curve under certain asymptotic regimes; however, we argue that performance curves provide a fuller picture, and often provide finite sample insights.

As a point of reference, we also introduce a particularly straightforward combined estimator that is a plug-in estimator of the MSE-minimizing weighted average of biased and unbiased estimators $\tbh$ and $\tuh$.
It is straightforward to analytically characterize several aspects of the plug-in estimator's performance curve, including the facts that (1) its worst-case performance is a bounded multiple of the unbiased estimator's MSE (we establish this with explicit constants); and (2) its performance converges to that of the unbiased estimator in the large bias regime.  The performance curve shown in \cref{fig:intro_example_curve} reflects this estimator, which we refer to as the \enquote{reference estimator}.

With performance curves and the reference estimator defined, we give a comparison of several recently proposed estimators (\cref{sec:empirical}), using simulation to construct their respective performance curves. This exercise illustrates that their behavior can be qualitatively different: For instance, not all estimators converge to the unbiased estimator as the bias grows.  Likewise, different estimators make different trade-offs between best-case performance and worst-case performance.
Interestingly, the reference plug-in estimator appears to be more robust than other estimators across the range of settings that we consider: its bias threshold is higher than the others under every simulation setting.

Finally, we illustrate how these ideas might be used in practice: In particular, they suggest that when the bias is likely to be substantial, a combination strategy may be ill-advised to begin with. To build intuition for how much bias is allowable, we construct a simulation to mimic the SPRINT Trial~\citep{SPRINT_Research_Group2015-gb}, alongside a much larger (but confounded) observational study.  Here, we simulate a curve analogous to~\cref{fig:intro_example_curve}, but directly parameterized by sensitivity analysis parameters, and find that the reference estimator can tolerate a moderate amount of influence of the confounder on treatment assignment.

\section{Statistical Framework}%

\subsection{Motivation and Setup: Biased and Unbiased Estimators}%
\label{sec:motivation_and_setup}
Abstractly, the statistical problem we study in this work is quite general, where we seek to estimate a parameter $\tz \in \R$ with minimum mean-squared error, 
\begin{equation}\label{eq:def_mse}
  \MSE(\thetah) \coloneqq \E\left[{(\thetah - \tz)}^2\right],
\end{equation}
by combining an estimator $\tuh$ with zero bias and an estimator $\tbh$ with an unknown bias. These estimators may or may not be independent of one another. The expectation in~\cref{eq:def_mse}, as well as other expectations discussed in this article, is taken with respect to sample draws used to construct both $\tuh$ and $\tbh$, and implicitly depends on the size of those samples. We denote the bias by $\mu$, where $E[\tuh] = \tz, \E[\tbh] = \tz + \mu$.  Note that the bias $\mu$ may implicitly depend on the sample size, but we do \textbf{not} assume that it converges to zero due to systematic bias.

We consider methods for combining the estimators $\tuh$ and $\tbh$ that take these estimates as input, as well as some relevant auxiliary statistics, such as the standard errors and covariance of the estimators. We use the following notation for the variance and covariance of the individual estimators: $\sigmabs \coloneqq \var(\tbh), \sigmaus \coloneqq \var(\tuh)$, and $\sigmacov \coloneqq \cov(\tuh, \tbh)$, which are generally estimated (directly or indirectly) as part of approaches which seek to combine the two estimators.\footnote{The specifics of estimating $\sigmauhs, \sigmabhs, \sigmacovh$ will vary based on the underlying estimators and application.  General approaches include the non-parametric bootstrap, or using consistent estimators of the asymptotic variance / covariance (scaled by $n^{-1}$) as an approximation of the finite-sample variance / covariance.  Estimators for the latter generally exist for regular and asymptotically linear (RAL) estimators.}

We focus on methods that are \emph{equivariant} in the estimators, meaning that if $(\tuh, \tbh; \dots) \mapsto \hat\theta(\tuh, \tbh; \dots)$, then $\hat\theta(\tuh + c, \tbh + c; \dots) = \hat\theta(\tuh, \tbh; \dots)$. This condition ensures that the method is not introducing bias of its own that makes the performance of the method depend on the true value of the parameter $\theta_0$, rather than only the estimation error of $\tuh$ and $\tbh$. Most of our analysis would directly extend to non-equivariant estimators, but would require specifying the choice of $\theta_0$ under which the analysis was performed. All of the methods from the literature that we study satisfy this equivariance property.

We evaluate estimators based on their mean squared error (MSE), noting that a central goal of estimator combination is to make a potentially favorable bias-variance trade-off that could reduce MSE\@.
For an estimator\footnote{The $\dots$ denotes any auxiliary information used by the estimator.} $(\tuh, \tbh; \dots) \mapsto \hat\theta$ and any fixed distribution $P$ for $\tuh$ and $\tbh$, the MSE for $\hat\theta$ is defined over draws of $\tuh$ and $\tbh$ from this distribution.
It is particularly useful to compare the combined estimator's MSE to that of the unbiased estimator $\tuh$ alone, which is the natural ``conservative'' alternative in most cases.
We define the relative MSE of a combined estimator $\hat \theta$ as
\begin{equation}
  \label{eq:relative_mse}
    \rMSE(\hat\theta) = \frac{\MSE(\hat\theta)}{\MSE(\tuh)}.
\end{equation}
An estimator $\thetah$ is said to out-perform / under-perform $\tuh$ when the relative MSE is less than 1 / greater than 1. 

\subsection{Limitations of Combined Estimators}%
\label{sec:impossibilities}
Ideally, there would exist an adaptive combination strategy for which $\rMSE(\thetah) < 1$, regardless of the unknown bias $\mu = \E[\tbh] - \tz$ of $\tbh$.
Unfortunately, this is not generally possible.  
Consider the following classic example, well known in the literature on shrinkage estimation.

\begin{example}[No Free Lunch]\label{ex:no_free_lunch}
  Let $X_1, \ldots, X_n \in \R$ be iid Gaussian samples\footnote{The one-dimensional restriction to $\tz \in \R$ in~\cref{ex:no_free_lunch} is a meaningful one: For estimating the mean of a multi-variate Gaussian with dimension $\geq 3$, where we consider the MSE over the entire vector $\|\thetah - \tz\|_2^2$, the sample average is dominated by shrinkage estimators, a fact exploited in recent work in causal inference \citep{Rosenman2020-cl}. However, even in these settings, the shrinkage estimator does not dominate the sample mean with respect to the MSE of any particular component of $\theta_0$.}, and consider estimation of $\tz \coloneqq \E[X]$. Let $\tuh$ be the sample average $n^{-1} \sum_{i=1}^{n} X_i$, and let $\tbh$ be some constant $c \in \R$ where we make no assumptions on the relationship between $c$ and $\tz$.  Here, $\tbh$ provides no information on $\tz$, and without further assumptions, we should not expect to improve upon $\tz$ by using $\tbh$ in some way.  More formally, $\tuh$ is admissible: In the one-dimensional setting, no estimator exists which \textbf{always} out-performs $\tuh$ in terms of MSE \citep{Stein1956-zl}.
\end{example}

When designing adaptive combined estimators $\thetah(\tuh, \tbh)$, this fact appears as the difficulty of estimating the unknown bias $\mu$ of $\tbh$.
Notably, all of the adaptive strategies that we review here estimate $\mu$ using $\tbh - \tuh$ as an unbiased estimator.
However, we can observe for independent $\tbh, \tuh$, estimating the bias of $\tbh$ is \textbf{at least as hard} as estimating the original parameter $\tz$, by the simple observation that $\var(\tbh - \tuh) = \var(\tbh) + \var(\tuh) \geq \var(\tuh)$.

The fact that a combined estimation strategy cannot, in general, dominate the unbiased estimator $\tuh$ is not necessarily a reason for pessimism.
Instead, knowing that combining estimators cannot be beneficial under all circumstances motivates understanding \emph{under which circumstances} a combined estimator could provide benefit.

\section{Performance Curves}

\subsection{Definition}
In this section, we introduce performance curves, which give a fine-grained view of the performance of a combined estimator $\thetah(\tuh, \tbh)$ across a range of scenarios for the bias of $\tbh$.
The performance curve plots $\rMSE(\thetah)$ across a family of distributions that are indexed by this unknown bias.

\begin{definition}
The \emph{performance curve} for an equivariant estimator combination method $\hat\theta(\tuh, \tbh; \dots)$ and a distribution $P$ over $(\tuh, \tilde{\theta}_b)$ with $\E_P[\tuh] = \E_P[\tilde{\theta}_b] = \theta_0$ is a plot of $\rMSE(\hat\theta)$ with respect to $\mu$, where the $\rMSE$ is with respect to the distribution over $(\tuh, \tbh)$ induced by drawing $(\tuh, \tilde{\theta}_b) \sim P$ and setting $\tbh = \tilde{\theta}_b + \mu$.
\end{definition}
Performance curves can be generalized to non-equivariant estimators by indexing the curve by the true parameter $\theta_0$, however such generalization is not necessary for the methods that we study in this work. 
\Cref{fig:perf-curve-anatomy} shows the performance curve for the reference estimator, defined in~\cref{sec:our_estimator}.
We recommend plotting performance curves with a horizontal line at 1, representing MSE equal to that of the unbiased estimator $\tuh$ alone, for reference.

The performance curve can be defined for any distribution $P$, and therefore, for any finite sample sizes for the data used to construct the unbiased and biased estimators. However, we pay special attention to jointly normal estimators $\tuh, \tbh$, for the following reason: If the respective sample sizes are substantial enough, it can be useful to approximate the distributions by an asymptotic distribution. For instance, if the estimators are jointly $\sqrt{n}$-consistent and asymptotically normal, then $P$ can be reasonably approximated as a Gaussian distribution with $\var( (\sqrt{n}(\tuh - \theta_0), \sqrt{n}(\tbh - \mu - \theta_0))^\top ) = \Sigma$. If we write the bias as $\E[ \tbh ] = \mu/\sqrt{n}$ (abusing notation by reusing $\mu$ as the scaled bias here), and plot the performance curve with the $\rMSE$ as a function of $\mu$, the performance curve will converge to a fixed object as $n \to \infty$ under reasonable regularity conditions on the combination method $\hat\theta(\tuh, \tbh; \dots)$. This asymptotic performance curve will correspond to the curve for choosing $P$ so that $(\tuh, \tbh)^\top \sim \normal{}( (\theta_0, \mu + \theta_0)^\top, \Sigma)$. 

A nice property of performance curves is that they can be instantiated concretely via simulation. For instance, in~\cref{sec:empirical} we use a simple simulation design where $\tuh$ and $\tbh$ are normally distributed, to approximate a large-sample scenario where $\tuh$ and $\tbh$ are asymptotically normal (similar to the motivation behind the asymptotic analysis in \citet{Yang2020-na}).  However, as we illustrate in~\cref{sec:semirealistic_synthetic_experiment}, other simulation designs are possible: A core thesis of this work is that a sequence of simulations (indexed by the unknown bias of $\tbh$) are a useful tool for gaining insights into the shape of the performance curves and the properties discussed in Section~\ref{sec:perf-curve-anatomy}.

\subsection{Anatomy of a Performance Curve}%
\label{sec:perf-curve-anatomy}

The performance curve features several key properties that matter when using an estimator combination method in practice.
Here, we discuss several of these properties, and what we might expect from ``reasonable'' combination methods.
These properties are summarized visually in~\cref{fig:perf-curve-anatomy}.

\begin{figure}[t]
\centering
  \begin{subfigure}[t]{0.45\textwidth}
  \centering
    \includegraphics[width=0.8\linewidth]{figs/intro-curve-scaled.pdf}
  \caption{}%
  \label{fig:intro-curve-scaled}
  \end{subfigure}
  \begin{subfigure}[t]{0.45\textwidth}
  \centering
    \includegraphics[width=0.8\linewidth]{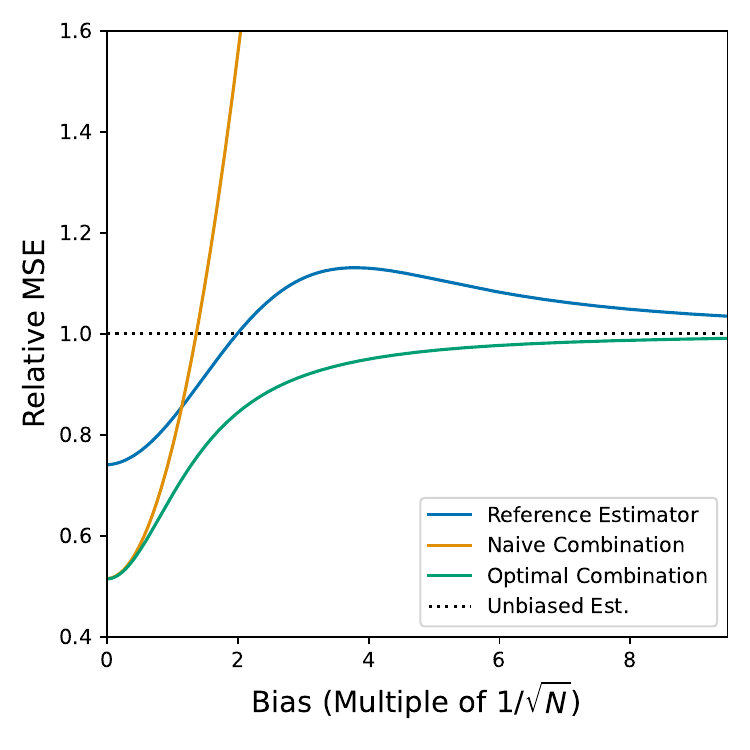}
  \caption{}%
  \label{fig:intro-curve-scaled-multiple}
  \end{subfigure}
  \caption{
    Anatomy of a performance curve: The y-axis shows the relative MSE (see~\cref{eq:relative_mse}). The x-axis shows the bias $\mu$ of the biased estimator. (\subref{fig:intro-curve-scaled}, also shown in~\cref{fig:intro_example_curve}) Performance curve for the reference estimator introduced in~\cref{sec:our_estimator}, highlighting the best-case performance (minimum relative MSE), worst-case performance (maximum relative MSE), and bias threshold, with a large-bias rMSE limit of 1 as $\mu / \srn \rightarrow \infty$. (\subref{fig:intro-curve-scaled-multiple}) Comparison against the performance curves for a naive variance-weighted combination (ignoring potential bias) and the optimal linear combination.
}%
\label{fig:perf-curve-anatomy}
\end{figure}

\paragraph{Best-case performance}
One useful reference for the best-case performance of an estimator is the minimax $\rMSE$, where $\inf_{\hat\theta} \sup_{\mu} \rMSE(\hat\theta)$ is bounded below by the $\rMSE$ of an optimistic estimator that correctly assumes $\mu = 0$,
\begin{equation*}
    \inf_{\hat\theta} \sup_{\mu} \rMSE(\hat\theta) \ge \inf_{\hat\theta} \sup_{\mu = 0} \rMSE(\hat\theta).
\end{equation*}
A simple corollary of standard results for the minimax error in mean estimation among location-equivariant estimators, and for combining two unbiased estimators is that the right hand side is attained by choosing $\hat\theta^\ast(\tuh, \tbh) = \lambdas \tbh + (1-\lambdas) \tuh$, with
\begin{equation}\label{eq:optimal_lambda_unbiased}
  \lambdas = \frac{\sigmaus - \sigmacov}{\sigmaus + \sigmabs - 2 \sigmacov}.
\end{equation}
Therefore, we expect the performance curve to always be above the $\rMSE$ of this estimator.\footnote{without making further (exploitable) assumptions about $\theta_0$, that can be exploited by $\hat\theta(\dots)$.}

\paragraph{Worst-case performance}
Unlike the best-case performance, which is bounded from below, there is no requirement that the worst-case is bounded from above. For instance, observe that $\rMSE(\tbh) \to \infty$ as $\mu \to \infty$. Therefore, a combination that ignores the potential for bias in $\tbh$ (e.g., by combining $\tbh$ and $\tuh$ using $\lambdas$ defined in~\cref{eq:optimal_lambda_unbiased}) can have unbounded error as a function of the bias (see~\cref{fig:intro-curve-scaled-multiple}: Naive combination). This is not to say that every method of combination will have unbounded worst-case performance. In fact, all of the methods considered in Section~\ref{sec:combination_methods} have bounded worst-case performance. However, without further information about $\mu$, the admissibility of $\tuh$ mentioned in Example~\ref{ex:no_free_lunch} implies that we should expect all performance curves to have some point with an $\rMSE > 1$, or be the constant $\rMSE = 1$ for all values of $\mu$. Therefore, we expect the worst-case $\rMSE$ to be above $1$.

\paragraph{Large-bias \rMSEtext{} Limit}
Many of the methods discussed in~\cref{sec:combination_methods} are adaptive to the bias, and mostly ignore $\tbh$ when the bias is large enough to easily tell from $\tbh$ and $\tuh$ alone; such methods have $\rMSE \to 1$ as $\mu \to \infty$. More generally, all of the approaches considered have a finite limit. Therefore, another natural property of a performance curve is the limiting $\rMSE$ as $\mu \to \infty$, which we will refer to as the large-bias \rMSEtext{} limit.

Additionally, while it is not guaranteed, all of the methods that we consider have an $\rMSE < 1$ for $\mu$ small enough. Then, as $\mu$ increases, it reaches a threshold where the $\rMSE$ crosses $1$, and exceeds the error of using the unbiased estimator on its own. We call this value of $\mu$ the \emph{bias threshold}. We can think of the bias threshold as a measure of the robustness of the combination method, as it corresponds to the maximum tolerance for bias for which the method improves upon the trivial baseline of using $\tuh$ alone.

\subsection{Comparison to Previous Analysis Approaches}
One of our main contributions is analysis of the entire performance curve for combination methods under parameterized distributions of the biased- and unbiased- input estimates. Such curves are easy to instantiate for specific methods through computer simulation, and provide rich insights for methodologists to consider when designing or choosing an approach for a given practical setting. Most existing analytical work on combining biased and unbiased estimators has focused on mathematical analysis of specific features of the performance curve. Mathematical approaches are advantageous in their generality, coming at the cost of the high-resolution afforded by mapping out the entire performance curve. Nonetheless, these mathematical approaches are interesting complementary modes of analysis to the one that we pursue, which provide useful perspectives in a variety of theoretical frameworks.

Common theoretical frameworks consider asymptotic arguments about combined estimators that typically consider bias regimes for $\tbh$, in which the bias' asymptotic order is compared to the asymptotic order of the variances $\tuh$, taken to be $O(n^{-1/2})$.
Thus, there are three bias regimes, in which estimators behave qualitatively differently: the low-bias regime in which the bias of $\tbh$ scales as $o(n^{-1/2})$; the high-bias regime in which the bias $\tbh$ is of higher order $\Theta(n^{-1/2})$; and the medium-bias or local asymptotic regime in which the bias is $\Omega(n^{-1/2})$. Estimators exhibit qualitatively different performance in these three regimes.

For example, \citet{Cheng2021-sn} focus specifically on the small-bias behavior of estimators, and demonstrate that their estimator is optimized the match the performance of the optimal combination of unbiased estimators in this regime. They also show that the estimator is consistent in the high-bias regime, showing that the large-bias \rMSEtext{} is bounded.

Meanwhile \citet{Chen2021-eo} consider two regimes;
see Theorem 3.2 of that work for a discussion of the phase transition that occurs between these two regimes. In one regime, the naively-pooled estimator is minimax optimal, and in the other, the RCT estimator is minimax optimal. The regime is determined based on whether or not an upper bound holds on the bias $\bar{\Delta} \lesssim \sigma_c / \sqrt{n_c}$, where $\sigma_c, n_c$ are the standard deviation and sample size of the RCT data.  They give an estimator that matches the performance of an oracle (which selects between these two) up to polylog factors (see Theorem 3.3).  However, the focus on rates ignores important constant factors that can be significant in practice. For example, estimators shown to be minimax optimal in their sense have significantly higher large-bias \rMSEtext{} limits in our simulations in Section~\ref{sec:combination_methods} (see Figure~\ref{fig:curves_anchor}).

So far, these results focus primarily on the behavior at small- or large- values of the bias. 
\citep{Yang2020-na}, on the other hand, focus on the medium bias regime via hypothesis testing with local asymptotic alternatives. In the context of our setting, this corresponds to a bias of $\mu / \srn$ for a fixed value of $\mu$.  They also derive the bias and MSE of their estimator under the null hypothesis that $\mu = 0$, and the fixed alternative where the bias does not scale with sample size, corresponding to the small-bias limit.

\citet{Dang2022-vd} refer to the local asymptotic regime as constituting intermediate bias, where they denote the bias as $\Psi_{s}^{\#}{(P_{0,n})}$ (see Eq. 3 of that work), and $\srn \Psi_{s}^{\#}{(P_{0,n})} \cip C$ for some constant $C$ (see Table 1).  They consider a setting where a \enquote{selector} is used to choose between different experiments (including potentially biased real-world data), where $s$ is used to denote the experiment (see the introduction to Section 3).  They derive the limiting distribution of the selector itself (see Table 1), as the minimizer of a quantity related to the MSE\@.

These regimes are also reflected in the performance curve.
However, the performance curve goes further, and highlights that a finer-grained analysis of the middle regime is particularly useful for making practical decisions about using a combined estimator in practice.
Specifically, on a performance curve, the low-bias regime appears as $\mu \rightarrow 0$, while the high-bias regime appears as $\mu \rightarrow \infty$.
The entirety of the curve in between corresponds to the middle-bias regime.
Importantly, the bias threshold at which the combined estimator underperforms the unbiased estimator alone, as well as the worst-case bias, both occur in this middle regime.
Thus, when making decisions about how to use a combined estimator in practice, the middle regime is the most relevant, and the specific value of the bias (not merely its order) plays a critical role. 
The goal of the performance curve is to bring this fine-grained structure of the problem front and center.

\section{Methods for Combining Estimators}

\subsection{A simple estimator for reference}%
\label{sec:our_estimator}

We first introduce a simple estimator, which does not require hyperparameter selection (in contrast to prior approaches, discussed in~\cref{sec:prior_approaches_for_combining_estimators}), and which has some favorable properties. In particular, we consider linear combinations of estimators of the form 
\begin{equation}\label{eq:estimator}
  \thl \coloneqq \lambda \tbh + (1 - \lambda) \tuh = \tuh + \lambda (\tbh - \tuh),
\end{equation}
for a real-valued weight $\lambda \in \R$, whose MSE depends on the trade-off between the relative variances of both estimators, as well as their covariance.  
Estimators of this general form have a long history in forecasting and model averaging, dating back to \citet{Bates1969-jc}, and variants of this linear combination strategy have been proposed in the context of estimating conditional average treatment effects using kernel regression \citep{Cheng2021-sn} and stratum-specific effects using shrinkage estimation \citep{Rosenman2020-cl}, as we discuss in~\cref{sec:prior_approaches_for_combining_estimators}.
The theoretically optimal weight $\lambdas$ for minimizing the MSE is given by\footnote{We give a short proof of this claim, which is a generally known fact, in~\cref{sec:proofs}.}
\begin{equation}\label{eq:optimal_lambda}
  \lambdas = \frac{\sigmaus - \sigmacov}{\mu^2 + \sigmaus + \sigmabs - 2 \sigmacov}.
\end{equation}
The optimal weight depends on several unknown quantities (including the bias) that must be estimated from data. The simple approach is to estimate $\lambdas$ using plug-in estimates of each quantity, using $(\tbh - \tuh)^2$ as an estimate of $\mu^2$, alongside plug-in estimates of the variance and covariance of $\tuh, \tbh$. We use $\thlh$ to denote the estimator with $\lambdah$ estimated in this fashion, and refer to this estimator as the \textbf{reference estimator} throughout. 
\begin{align}\label{eq:lambdah}
  \thlh &= \lambdah \tbh + (1 - \lambdah) \tuh &\text{where}&& \lambdah &= \frac{\sigmauhs - \sigmacovh}{{(\tuh - \tbh)}^2 + \sigmauhs + \sigmabhs - 2 \sigmacovh}.
\end{align}

\subsection{Other approaches for combining estimators}%
\label{sec:prior_approaches_for_combining_estimators}%
\label{sec:combination_methods}

Here, we review a few recent proposals for combining biased and unbiased estimators. Typically, these are motivated by scenarios where the estimators are independent (i.e., combining observational effect estimates with those of randomized trials), but they are straightforward to extend to the general case we consider here, where the estimators may be correlated. We defer more detail to~\cref{sec:additional_baseline_details}. For each set of estimators, we show an illustrative set of performance curves for a single data-generating process $P$, where $\tbh, \tuh$ are independent with $\tbh \sim \cN(\tz + \mu, 1/n), \tuh \sim \cN(\tz, 1/n)$ with $\tz = 1$ and $n = 1000$.

\begin{figure}[t]
\begin{subfigure}[t]{0.24\textwidth}
  \includegraphics[width=\linewidth, height=\linewidth, keepaspectratio]{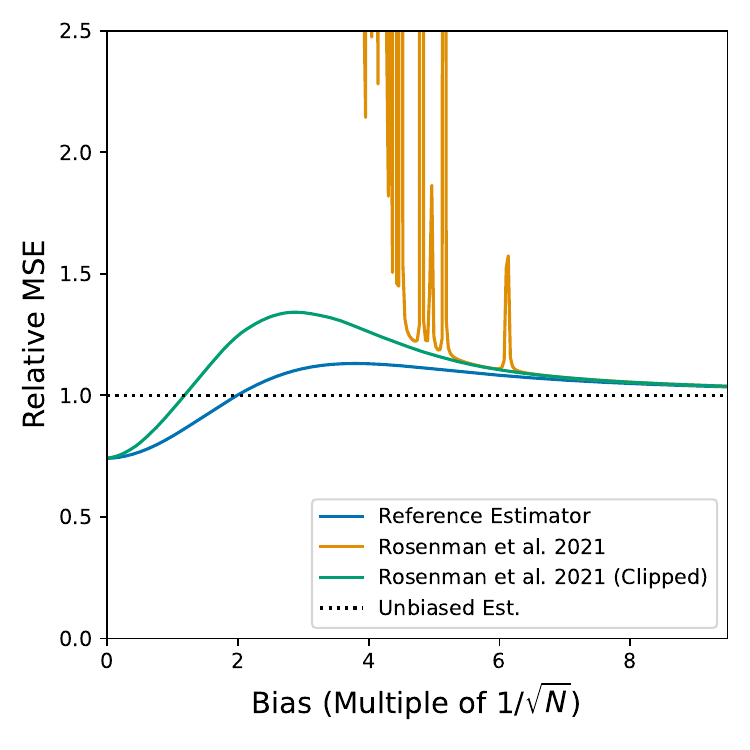}
  \caption{}%
  \label{fig:curves_rosenman}
\end{subfigure}
\begin{subfigure}[t]{0.24\textwidth}
  \includegraphics[width=\linewidth, height=\linewidth, keepaspectratio]{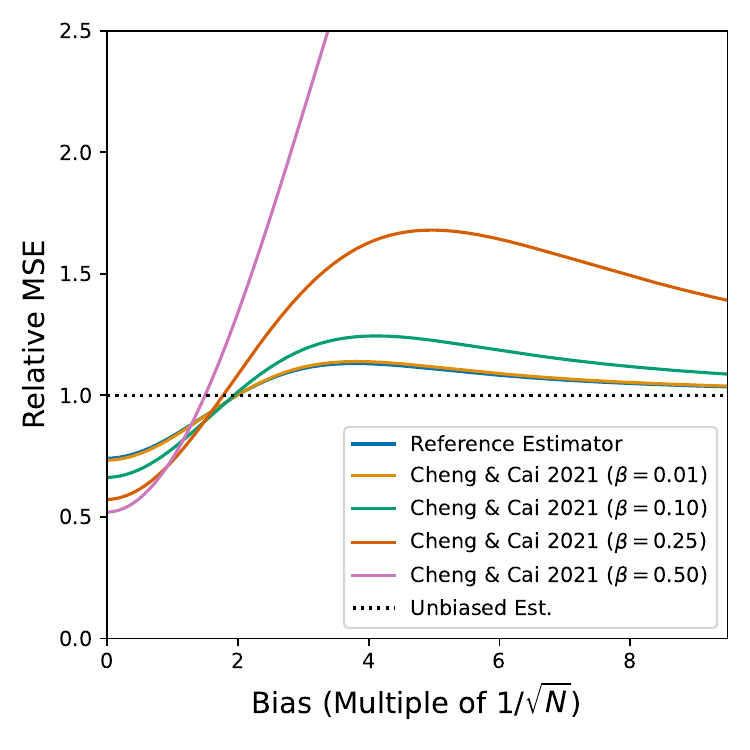}
  \caption{}%
  \label{fig:curves_cheng}
\end{subfigure}
\begin{subfigure}[t]{0.24\textwidth}
  \includegraphics[width=\linewidth, height=\linewidth, keepaspectratio]{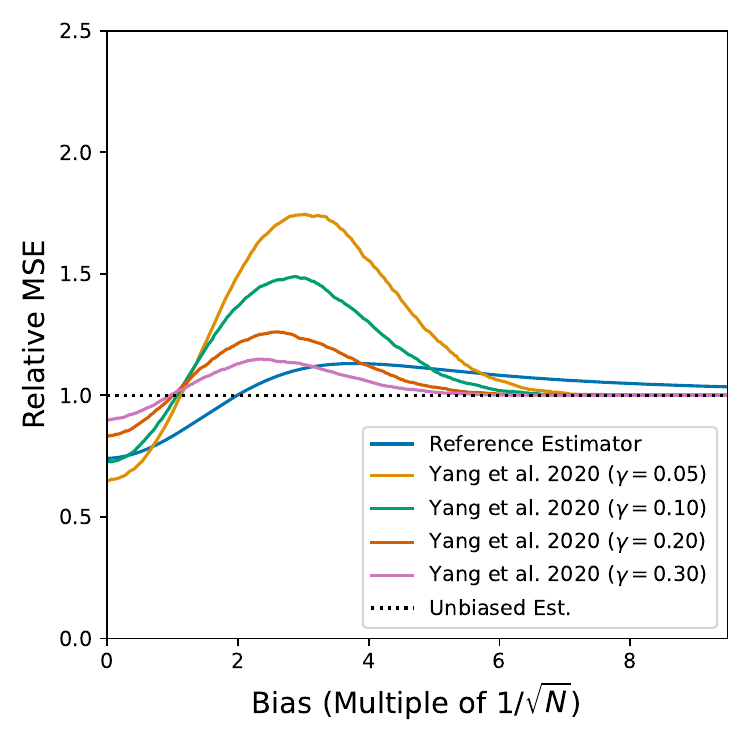}
  \caption{}%
  \label{fig:curves_ht}
\end{subfigure}
\begin{subfigure}[t]{0.24\textwidth}
  \includegraphics[width=\linewidth, height=\linewidth, keepaspectratio]{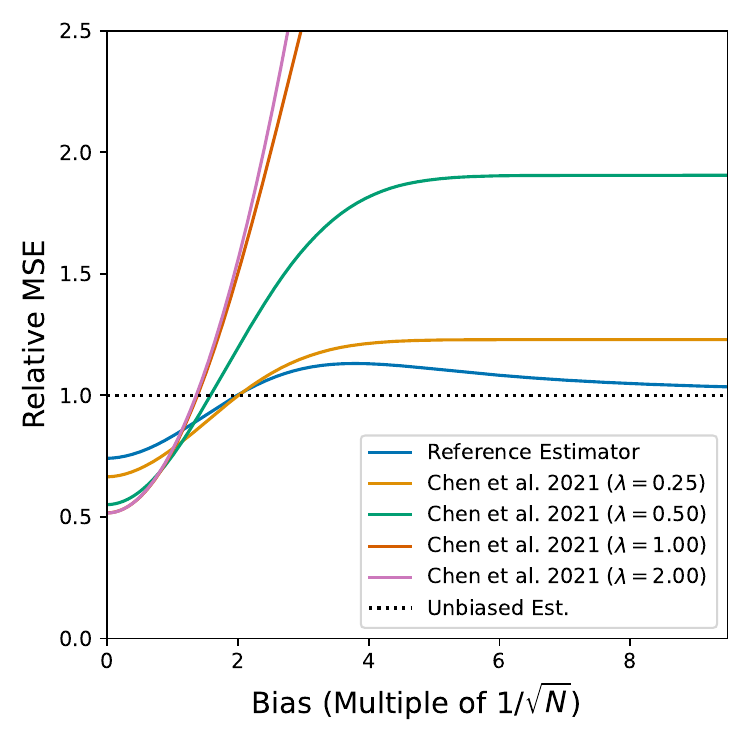}
  \caption{}%
  \label{fig:curves_anchor}
\end{subfigure}%
\label{fig:curves}
\caption{Performance curves reveal qualitative differences in behavior for different estimators, as well as the impact of different hyperparameter choices.  Here we give the performance curves for (\subref{fig:curves_rosenman}) \citet{Rosenman2020-cl}, with and without clipping (\subref{fig:curves_cheng}) \citet{Cheng2021-sn} across choices of hyperparameter $\beta$ (\subref{fig:curves_ht}) \citet{Yang2020-na} across choices of significance threshold $\gamma$ (\subref{fig:curves_anchor}) \citet{Chen2021-eo} across choices of hyperparameter $\lambda$.  We discuss these patterns in more detail in~\cref{sec:combination_methods}.}
\end{figure}
\textbf{Shrinkage}: \citet{Rosenman2020-cl} propose shrinkage estimators for combining (multivariate) observational and experimental estimators via a similar linear combination. While they consider independent estimators, their approach is straightforward to adapt to our setting.  In particular, they seek to estimate the optimal weighting ($\lambdas$ in~\cref{eq:optimal_lambda}), observe that $(\tuh - \tbh)^2$ is an unbiased estimate of the denominator, 
and propose to estimate $\lambdas$ by
\begin{equation}
  \lambdah = (\sigmauhs - \sigmacovh)/(\tuh - \tbh)^2, 
\end{equation}
which is optionally clipped to lie in $[0, 1]$. 
The main theoretical guarantees of \citet{Rosenman2020-cl} are not relevant in our setting, as they focus on the case where the target estimand is multivariate, allowing for the use of classical results \citep{Stein1981-az, Strawderman2003-gq} which give dominance in terms of MSE across the entire vector.

In~\cref{fig:curves_rosenman} we show the performance curve of the original estimator and the clipped estimator, observing that both have a similar large-bias limit of 1, but that clipping plays an instrumental role in reducing the rMSE for smaller values of bias.  In fact, in this simulated setting, clipping uniformly improves the performance of the estimator, which otherwise never achieves an rMSE less than 1.

\textbf{Adaptive Linear Combination}: \citet{Cheng2021-sn} propose an adaptive linear combination of estimators that is similar to the reference estimator. 
Their focus is on CATE estimation with kernel regression, in the context of combining experimental and trial estimators. 
However, for the setting we consider (estimating a real-valued parameter), their estimator reduces to a similar linear combination, where 
\begin{equation}
  \lambdah  = \frac{\sigmauhs - \sigmacovh}{n^{-\beta} {(\tuh - \tbh)}^2 + \sigmauhs + \sigmabhs - 2 \sigmacovh}
\end{equation}
where $\beta > 0$ is a hyperparameter. Note that this differs from~\cref{eq:lambdah} due to the factor of $n^{-\beta}$ in the denominator.
The theoretical results presented in \citet{Cheng2021-sn} focus on consistency and adaptivity, showing that when the bias is consistent regardless of the bias, and that if the bias is zero, $\lambdah$ converges to the optimal inverse-variance weights.  This latter property is the motivation for including the $n^{-\beta}$ term as part of the estimated bias,\footnote{This estimator can also be motivated as a form of ridge regression with a weighted $\ell_2$ penalty.  \citet{Cheng2021-sn} also propose an estimator that is analogous to a weighted $\ell_1$ penalty, which we do not discuss here.} but introduces difficulties in choosing an appropriate value of $\beta$.

In~\cref{fig:curves_cheng}, we observe the general shape of the performance curve, as well as the impact of varying the hyperparameter $\beta$.  For smaller values of $\beta$, the approach is more conservative, with lower worst-case rMSE, and higher best-case rMSE\@.  Larger values of $\beta$ are less conservative, achieving better performance in the zero-bias regime, but with the potential for substantially worse performance when the bias is large (see the curve for $\beta = 0.5$). For every hyperparameter setting, the rMSE converges to 1 in the large-bias limit.

\textbf{Hypothesis Testing}: \citet{Yang2020-na} give a procedure that first tests for bias, pooling observational and experimental data if this test fails to reject, and which otherwise uses only the experimental data. The details of this test depend on the underlying estimators,\footnote{We give a more detailed treatment in~\cref{sec:baselines}.} but when $\tuh$ and $\tbh$ correspond to sample averages, this reduces to a simple form: The test statistic is given by $T_n = (\tuh - \tbh)^2/(n \sigmah)$, where $\sigmah$ is an estimate of the standard deviation of $\tuh - \tbh$, such that $T_n$ follows a chi-square distribution under the null hypothesis that the bias is zero. 
Their analysis focuses on the asymptotic properties of this test-based procedure, including the asymptotic regime where the bias scales as $n^{-1/2}$, where each estimator is asymptotically normal, and where the scaled errors follow a limiting mixture distribution.  Notably, they characterize the asymptotic bias and MSE in this regime as a function of both the unknown bias and a fixed choice of threshold for the hypothesis test.  The authors suggest using the estimated bias to select the threshold, though their asymptotic analysis does not handle this case.

In~\cref{fig:curves_ht}, we plot the performance curves of this approach for different values of $\gamma$, the significance threshold for rejecting the null hypothesis that the estimators share a common limit.  Higher values of $\gamma$ imply a higher likelihood of rejection, and hence more conservative performance, with higher worst-case rMSE and lower best-case rMSE\@.  For every hyperparameter setting, the rMSE converges to 1 in the large-bias limit, but tends to zero fairly quickly, as the probability of rejection goes to 1.

\textbf{Anchored Thresholding}: \citet{Chen2021-eo} attempt to estimate and correct for the bias in $\tbh$, estimating it via soft-thresholding as 
\begin{equation}
  \hat{\mu} = \begin{cases}
    \text{sign}({\tbh} - {\tuh})\left(\abs{{\tbh} - {\tuh}} - \lambda
    \sqrt{\hat{\var}({\tbh} - {\tuh})}\right), &\ \text{if } \abs{{\tbh} - {\tuh}}
    \geq \lambda \cdot \sqrt{\hat{\var}( {\tbh} - {\tuh} )} \\
    0, &\ \text{otherwise.}
  \end{cases}
\end{equation}
and then combine $\tuh$ and $\tbh - \hat{\mu}$ using inverse variance-weighting. 
\citet{Chen2021-eo} demonstrate that this estimator achieves the performance (up to poly-log factors) of an oracle that selectively chooses an estimator based on whether or not (in our notation) the bias $\mu$ is larger or smaller than $\sigmau$, and demonstrate that this performance is minimax optimal under a certain data-generating process.
In contrast to their analysis, which hides constant factors, our investigation in~\cref{sec:empirical} focuses on understanding the constant factors involved in a finite-sample setting, and understanding qualitative performance across different values of the bias. However, we do not make any claims about performance in the setting where the observational data is of a higher order than the experimental data, which is a main focus of their work. 

In~\cref{fig:curves_anchor}, we plot the performance curves of this approach for different values of $\lambda$, observing qualitatively different behavior than the other estimators considered here.  In particular, the large-bias rMSE limit is not 1, but rather a value determined by $\lambda$.  This result follows from the fact that soft-thresholding is applied to the estimated bias itself, so that large estimated values of bias are always shrunk towards zero.  For smaller values of $\lambda$ (i.e., less aggressive shrinkage towards zero in the estimated bias), the approach is more conservative, with higher worst-case rMSE and lower best-case rMSE\@.  In contrast to the other estimators considered here, the large-bias rMSE limit is not 1, but rather the worst-case rMSE\@.

\section{Comparing Performance Curves}%
\label{sec:empirical}

Having investigated the performance curves of each estimator in the previous section with a single distribution $P$ to build intuition, we conduct a larger-scale comparison of the properties of each estimator across a broader range of settings.  We use the reference estimator $\thlh$ as a common point of comparison to the approaches discussed in~\cref{sec:prior_approaches_for_combining_estimators}.

\begin{table}[t]
  \centering
  \caption{In the experiments in~\cref{sec:empirical}, we evaluate each combined estimator across each of the following parameter settings. For each value of $\mu$, the estimators $\tuh, \tbh$ are constructed from drawing $n$ samples of $(\psiu, \psib)$ from a normal distribution with mean $(\tz, \tz + \mu)$, marginal variances $\var(\psiu)$, $\var(\psib)$ and covariance determined by the chosen correlation.  The value of $\tz$ is 1 throughout.}\label{tab:simulation_parameter}
  \begin{tabular}{lr}
    \toprule
    Parameter & Values \\
    \midrule
    $n$ & $\{500, 1000, 2000, 4000\}$ \\
    $\var(\psiu)$&$\{1, 2, 4, 8, 16\}$ \\
    $\var(\psib)$&$\{0, 1, 2, 4, 8, 16\}$\\
    $\text{corr}(\psiu, \psib)$&$ \{-0.5, -0.25, 0, 0.25, 0.5\}$\\
    $\mu$ & $[0, 1.5]$, increments of 0.002\\
    \bottomrule
  \end{tabular}
\end{table}

\textbf{Setup}: We let $\psiu, \psib$ be drawn from a multivariate normal distribution, where $\tuh = \frac{1}{n} \sum_{i} \psiu^{(i)}$ and $\tbh$ are the sample averages, and where we can directly estimate quantities like the variance $\var(\psiu)$. We then investigate the performance of each combined estimator for different variances / covariances of $(\psiu, \psib)$, as we vary the bias. In particular, we compute the squared error of $\thetah$ and the squared error of $\tuh$, and for each set of simulation parameters in~\cref{tab:simulation_parameter}, we repeat this process 10000 times to estimate the MSE\@. For each simulation setting, we compute the performance curves, as well as the above notable properties of these curves by sweeping over the value of $\mu$. This can be done efficiently, i.e., without resampling 10000 random variables again, by simply adding a variety of offsets to the previously sampled $\tbh$ to sweep over $\mu$.

We then compare the reference estimator to the approaches described in~\cref{sec:prior_approaches_for_combining_estimators}, with the following additional details. For the shrinkage estimator of \citet{Rosenman2020-cl}, we clip the weights to lie in $[0, 1]$, having observed in~\cref{sec:prior_approaches_for_combining_estimators} that this is necessary to get stable results for small values of the bias.
\citet{Yang2020-na} propose a data-adaptive approach to choosing the significance level in their test-based procedure, by estimating the bias directly and then simulating from the asymptotic mixture distribution of their estimator under that bias to select a cutoff that yields optimal performance.  We replicate this data-driven approach in our experiments, as described in~\cref{sec:additional_baseline_details}, but report results for different fixed thresholds in~\cref{sec:comparison_to_hypothesis_testing}.
For the remaining estimators, which require a choice of hyperparameter, we choose single value for simplicity, but report additional comparisons to other hyperparameters in~\cref{sec:comparison_of_different_hyperparameter_settings}.
For the adaptive linear combination proposed by \citet{Cheng2021-sn}, we use $\beta = 0.25$.
For the anchored thresholding approach of~\citet{Chen2021-eo}, we use $\lambda = 0.5 \sqrt{\log n}$, in keeping with their synthetic experimental setup.

\textbf{Summary of results}: For the simulation parameters we consider, the reference estimator has a higher bias threshold than alternative approaches, and no alternative estimator dominates in terms of better best-case and worst-case performance. Here, we focus on the relative performance of each approach, but we present additional results in~\cref{sec:understanding_factors_that_drive_improvement_and_} that shed light on the factors which drive the best and worst-case trade-off for the reference estimator.

\textbf{Building intuition with a single setting}:  In Figure~\ref{fig:single_example_curve_baseline}, we plot the performance curve of each approach on a common figure, for same parameter settings used in~\cref{sec:prior_approaches_for_combining_estimators} where $\var(\psiu) = \var(\psib) = 1, n = 1000, \corr(\psiu, \psib) = 0$. 

First, as observed previously, each estimator makes a trade-off between the worst-case and best-case relative MSE\@. Here, the reference estimator has among the lowest worst-case relative MSE of any approach, comparable to that of the hypothesis-testing estimator.  Second, we observe that the bias threshold for the reference estimator is higher than that of the alternative estimators, occurring at around $\mu = 0.06$ (around $2 / \srn$), while the value of $\mu$ that attains the worst-case relative MSE falls in the range $0.10$ to $0.15$ ($3/\srn$ to $5/\srn$) for all estimators except for the anchored thresholding estimator: In this range the squared bias of $\tbh$ is of the same order as the variance of the unbiased estimator.  Intuitively, in this regime the bias is sufficiently large that it introduces additional MSE, but small enough that it is difficult to detect.

\begin{figure}[t]
\centering
  \begin{subfigure}[t]{0.3\textwidth}
    \centering
    \includegraphics[width=\linewidth, height=\linewidth, keepaspectratio]{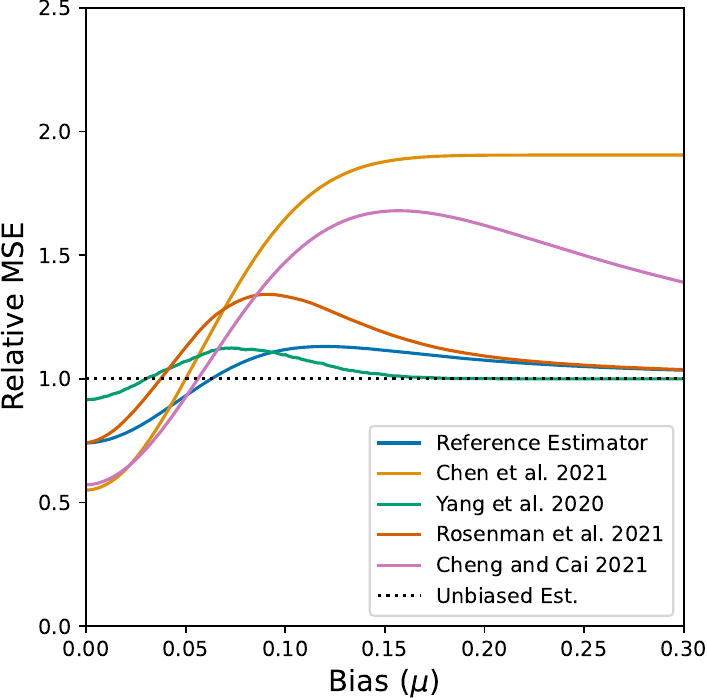}
    \caption{}%
    \label{fig:single_example_curve_baseline}
  \end{subfigure}
  \begin{subfigure}[t]{0.3\textwidth}
    \centering
    \includegraphics[width=\linewidth, height=\linewidth, keepaspectratio]{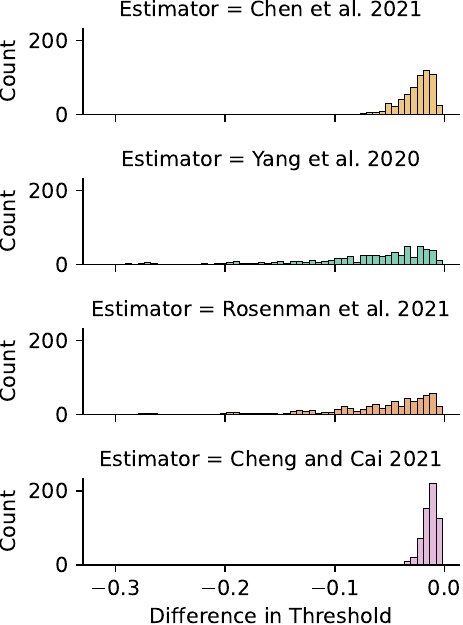}
    \caption{}%
    \label{fig:histogram_thresholds}
  \end{subfigure}
  \begin{subfigure}[t]{0.3\textwidth}
    \centering
    \includegraphics[width=\linewidth, height=\linewidth, keepaspectratio]{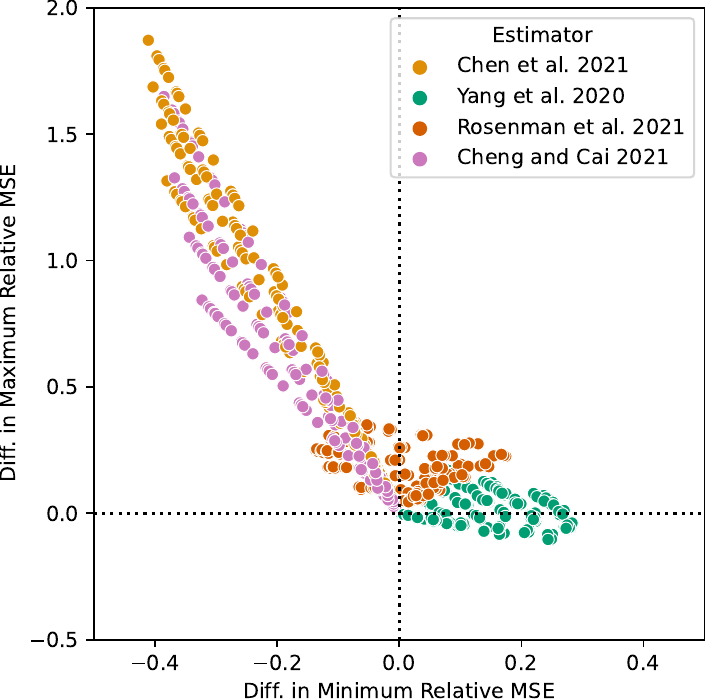}
    \caption{}%
    \label{fig:best_worst_curve_diff}
  \end{subfigure}
  \caption{
    (\subref{fig:single_example_curve_baseline}) For a fixed set of parameters from Table~\ref{tab:simulation_parameter}, we show $\mu$ on the x-axis, and $\MSE(\thetah) / \MSE(\tuh)$ on the y-axis (lower is better), for each estimator. 
    (\subref{fig:histogram_thresholds}) For each simulation setting, we compute the maximum value of $\mu$ (the \enquote{bias threshold}) for which the reference estimator has lower MSE than $\tuh$, and do similarly for each alternative estimator.  We then compute the difference of these thresholds, where a \textbf{negative} value means that the threshold of the reference estimator is higher (better), and plot a histogram of these differences across all simulations.
    (\subref{fig:best_worst_curve_diff}) For each simulation setting, we plot (on the y-axis) the maximum relative MSE of $\thetah$ for each alternative approach, minus the maximum relative MSE of $\thlh$, and (on the x-axis) we similarly plot the difference in the minimum relative MSE, where negative values (in both cases) signify an improvement over $\thlh$.}%
  \label{fig:synthetic_simulation}
\end{figure}

\textbf{Comparing curves across a variety of settings}: In~\cref{fig:histogram_thresholds,fig:best_worst_curve_diff}, we plot results over all parameter settings given in~\cref{tab:simulation_parameter}.  In~\cref{fig:histogram_thresholds}, we observe that the reference estimator has the highest bias-tolerance of the estimators (and hyperparameter settings%
\footnote{While we use the \enquote{default} hyperparameters for the estimator of \citet{Chen2021-eo} here, there are other values for which their bias threshold is marginally higher than that of the reference estimator, see~\cref{sec:comparison_of_different_hyperparameter_settings}.}%
) considered here.
We show differences of bias thresholds for each alternative estimator, where a \textbf{negative} value means that the threshold of the reference estimator is higher, and where each value corresponds to a different simulation. The threshold of the reference estimator is the highest in every scenario, indicated by fact that all reported values are negative.

In~\cref{fig:best_worst_curve_diff}, we observe that none of the alternative estimators dominates the reference estimator (achieving best and worst-case MSE that are \textbf{both} lower than that of the reference estimator) in any of the simulated scenarios. For each simulation setting, we plot (on the y-axis) the maximum relative MSE of $\thetah$ for each alternative approach, minus the maximum relative MSE of $\thlh$, and (on the x-axis) we similarly plot the difference in the minimum relative MSE, where negative values (in both cases) signify an improvement over $\thlh$.  Some estimators can be observed to make consistently different trade-offs:  In particular, the anchored-threshold estimator of \citet{Chen2021-eo} always has a higher maximum relative MSE, as well as a lower minimum relative MSE\@.  This may reflect the optimism of the approach: If the observed difference $\tuh - \tbh$ is sufficiently small, it assumes that the bias is equal to zero. Note that in some simulation scenarios (the upper right quadrant), the reference estimator has both a lower best-case and worst-case relative MSE\@. 

\section{Theoretical Bounds on Worst-Case Performance}%
\label{sec:theoretical}

So far, we have only investigated the performance of each estimator in simulation, which demonstrated a degree of robustness inherent to each approach, each of which appeared to have bounded estimation error for arbitrary values of the bias.  In this section, we demonstrate that this pattern holds more generally for the reference estimator. First, we observe that the reference estimator is consistent for the underlying causal effect (e.g., in the large-sample limit of experimental data), regardless of the bias, a property shared with the hypothesis-testing estimator of \citet{Yang2020-na} and the adaptive linear combination estimator of \citet{Cheng2021-sn}
\begin{restatable}[Consistency]{theorem}{Consistency}\label{thm:consistency}
  Let $\tbh \cip \tz + \mu$, $\tuh \cip \tz$, and let $n \sigmauhs \cip \nu_u$, $n \sigmabhs \cip \nu_b$, and $n \sigmacovh \cip \nu_{bu}$ for finite constants $\nu_u, \nu_b$, and $\nu_{bu}$.  Then, if $\mu \neq 0$, we have it that $\lambdah \cip 0$ and $\thlh \cip \tz$. 
\end{restatable}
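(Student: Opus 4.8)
The plan is to argue directly from the explicit formula~\eqref{eq:lambdah} for $\lambdah$ using only the continuous mapping theorem and Slutsky's theorem. First I would dispose of the variance terms: since $n\sigmauhs \cip \nu_u$, $n\sigmabhs \cip \nu_b$, and $n\sigmacovh \cip \nu_{bu}$ with finite limits, multiplying by $n^{-1}\to 0$ and applying Slutsky gives that each unscaled quantity vanishes, $\sigmauhs \cip 0$, $\sigmabhs \cip 0$, $\sigmacovh \cip 0$. In particular the numerator of $\lambdah$ satisfies $\sigmauhs - \sigmacovh \cip 0$.

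Next I would handle the denominator. By the continuous mapping theorem, $\tuh - \tbh \cip \tz - (\tz + \mu) = -\mu$, hence $(\tuh - \tbh)^2 \cip \mu^2$, and combining with the previous step the denominator $(\tuh - \tbh)^2 + \sigmauhs + \sigmabhs - 2\sigmacovh \cip \mu^2$. This is precisely where the hypothesis $\mu \neq 0$ is used: it ensures the limiting denominator $\mu^2$ is strictly positive, so the ratio $(a,b)\mapsto a/b$ is continuous at the limit point $(0, \mu^2)$ and is well-defined with probability tending to one. A further application of the continuous mapping theorem then yields $\lambdah \cip 0/\mu^2 = 0$. For the second conclusion, write $\thlh = \tuh + \lambdah(\tbh - \tuh)$; since $\tbh - \tuh \cip \mu$ is bounded in probability while $\lambdah \cip 0$, the product $\lambdah(\tbh - \tuh) \cip 0$ by Slutsky, and therefore $\thlh \cip \tz + 0 = \tz$.

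The argument is essentially routine, so there is no substantial obstacle; the only point requiring care is confirming that the denominator does not degenerate to zero in the limit, which is exactly the role played by $\mu \neq 0$. It is worth noting that one does not need the denominator $\mu^2 + \sigmauhs + \sigmabhs - 2\sigmacovh$ to be positive, or even nonzero, along the sequence itself: convergence in probability of the numerator and denominator to $(0, \mu^2)$ with $\mu^2 > 0$ suffices, with $\lambdah$ defined arbitrarily (say, as $0$) on the probability-$o(1)$ event that the denominator vanishes. (When $\mu = 0$ both numerator and denominator tend to $0$, the ratio map is no longer continuous at the limit, and the behavior of $\lambdah$ is genuinely different — which is why that regime is excluded from this theorem and treated separately via the finite-sample MSE bound.)
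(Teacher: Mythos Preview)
Your proof is correct and follows essentially the same approach as the paper: both argue directly from the explicit ratio formula for $\lambdah$ via the continuous mapping theorem, then conclude for $\thlh$ from $\thlh = \tuh + \lambdah(\tbh - \tuh)$. The only cosmetic difference is that the paper first multiplies numerator and denominator by $n$ (so the numerator tends to the finite constant $\nu_u - \nu_{bu}$ while the denominator diverges because $n(\tuh-\tbh)^2 \cip \infty$), whereas you leave the expression unscaled (so the numerator tends to $0$ and the denominator to $\mu^2>0$); both yield $\lambdah \cip 0$ by the same mechanism.
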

All proofs can be found in~\Cref{sec:proofs}.  The intuition is straightforward: When the bias is non-zero, the term $(\tuh - \tbh)^2$ converges to a non-zero constant, while all other terms in $\lambdah$ converge to zero.\footnote{When $\tuh$ and $\tbh$ are asymptotically normal estimators, then the conditions of~\cref{thm:consistency} are easily satisfied, as discussed in~\cref{sec:proof_of_minor_claims}.} 

However, such asymptotic results do not say much about the impact of bias in finite samples. Motivated by this, we bound the worst-case behavior of $\thlh$ under arbitrary values of the bias, where we assume that $\tuh$ is unbiased. We additionally assume away some pathological cases where $\lambdah$ is not well-defined, or is trivially zero.
\begin{assumption}[Unbiased Experimental Estimator]\label{asmp:unbiased_u}
  The estimator $\tuh$ is unbiased, i.e., $\E[\tuh] = \tz$
\end{assumption}
\begin{assumption}[Non-Zero Variance]\label{asmp:nonzero-variance-diff}
  The unbiased estimator has non-zero variance $\sigmaus > 0$, and the difference $\tuh - \tbh$ has non-zero variance
  \begin{equation}
  \var(\tuh - \tbh) = \sigmaus + \sigmabs - 2 \sigmacov > 0
  \end{equation}
  and the estimators $\sigmauhs, \sigmabhs, \sigmacovh$ satisfy $\sigmauhs + \sigmabhs - 2 \sigmacovh > 0$.
\end{assumption}
Assumption~\ref{asmp:nonzero-variance-diff} rules out the case where $\tbh = \tuh + \mu$, and ensures that $\lambdah$ is always well-defined for all $\mu$, including $\mu = 0$.
Our main result is Theorem~\ref{thmthm:mse_bound_cov_unknown_var}, which bounds the relative MSE of the reference estimator by a constant factor that depends on the behavior of the estimators $\sigmauhs, \sigmabhs, \sigmacovh$.
\begin{restatable}[Bound on MSE]{theorem}{MSEBoundCovUnknownVar}\label{thmthm:mse_bound_cov_unknown_var}
  Under Assumptions~\ref{asmp:unbiased_u} and~\ref{asmp:nonzero-variance-diff}, with estimators $\sigmabhs, \sigmauhs, \sigmacovh$ that have bounded second moments, the MSE of the reference estimator $\thlh$ is bounded by 
  \begin{equation*}
    \E[{(\thetahlh - \tz)}^2] \leq {\left(\sigmau + \frac{1}{2} \sqrt{\E[S^2]}\right)}^2
  \end{equation*}
  where $\E[S^2] = \E[{(\sigmauhs - \sigmacovh)}^2 / (\sigmauhs + \sigmabhs - 2 \sigmacovh)]$.
\end{restatable}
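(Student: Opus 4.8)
The plan is to control $\abs{\thetahlh - \tz}$ pathwise and only then take expectations, exploiting that the ``correction'' added to $\tuh$ is uniformly small no matter how large the bias is. First I would write $U \coloneqq \tuh - \tz$ and $W \coloneqq \tbh - \tuh$; by Assumption~\ref{asmp:unbiased_u}, $\E[U] = 0$ and $\E[U^2] = \sigmaus$, and from~\eqref{eq:lambdah} we have $\thetahlh - \tz = U + \lambdah W$ with
\[
\lambdah W \;=\; \frac{(\sigmauhs - \sigmacovh)\,W}{W^2 + (\sigmauhs + \sigmabhs - 2\sigmacovh)}.
\]
Set $D \coloneqq \sqrt{\sigmauhs + \sigmabhs - 2\sigmacovh}$, which is strictly positive almost surely by Assumption~\ref{asmp:nonzero-variance-diff}, so that $\lambdah$ is everywhere well-defined. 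The key step is the elementary inequality $W^2 + D^2 \ge 2\abs{W}\,D$ (AM--GM), which yields the pathwise bound
\[
\abs{\lambdah W} \;\le\; \frac{\abs{\sigmauhs - \sigmacovh}\,\abs{W}}{2\abs{W}\,D} \;=\; \frac{1}{2}\,\frac{\abs{\sigmauhs - \sigmacovh}}{\sqrt{\sigmauhs + \sigmabhs - 2\sigmacovh}} \;=\; \frac{1}{2}\,S,
\]
where $S \coloneqq \abs{\sigmauhs - \sigmacovh}/\sqrt{\sigmauhs + \sigmabhs - 2\sigmacovh} \ge 0$ (and when $W = 0$ the bound $\abs{\lambdah W} \le \tfrac12 S$ holds trivially). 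Note $S^2$ is exactly the integrand appearing in the statement.

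The remainder is a two-line computation. By the triangle inequality $\abs{\thetahlh - \tz} \le \abs{U} + \tfrac12 S$ almost surely, so squaring and taking expectations gives
\[
\E\!\left[(\thetahlh - \tz)^2\right] \;\le\; \E[U^2] + \E[\abs{U}\,S] + \tfrac14\,\E[S^2] \;=\; \sigmaus + \E[\abs{U}\,S] + \tfrac14\,\E[S^2],
\]
and Cauchy--Schwarz bounds the cross term by $\E[\abs{U}\,S] \le \sqrt{\E[U^2]}\,\sqrt{\E[S^2]} = \sigmau\,\sqrt{\E[S^2]}$. Hence $\E[(\thetahlh - \tz)^2] \le \sigmaus + \sigmau\sqrt{\E[S^2]} + \tfrac14\E[S^2] = \bigl(\sigmau + \tfrac12\sqrt{\E[S^2]}\bigr)^2$, which is the claim. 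Every inequality above remains valid in $[0,+\infty]$, so the bounded-second-moment hypothesis on $\sigmauhs,\sigmabhs,\sigmacovh$ is needed only to guarantee $\E[S^2] < \infty$ and thereby make the bound informative, not for it to hold.

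The proof presents no serious obstacle; the one place that warrants care is the random denominator of $\lambdah$, since the AM--GM step is precisely what turns the dependence of $W = \tbh - \tuh$ on the unknown bias $\mu$ into a bound in which $\mu$ no longer appears. I would finish by remarking that \Cref{thmthm:mse_bound_cov} is recovered as the special case in which $\sigmauhs,\sigmabhs,\sigmacovh$ are replaced by their population values $\sigmaus,\sigmabs,\sigmacov$: then $S$ is deterministic, and substituting $\sigmacov = \rho\,\sigmau\sigmab$ and $\sigmab = c\,\sigmau$ gives $S = \sigmau\,\abs{1 - \rho c}/\sqrt{1 - 2\rho c + c^2}$, so that $\bigl(\sigmau + \tfrac12 S\bigr)^2 = \sigmaus\bigl(1 + \tfrac12\abs{1-\rho c}/\sqrt{1 - 2\rho c + c^2}\bigr)^2$ matches the bound stated there.
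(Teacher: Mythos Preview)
Your proof is correct and reaches exactly the same intermediate bound as the paper, but by a genuinely shorter route. The paper's argument passes through~\Cref{lemma:MSEBoundCov_supremum}: it writes $\E[(\thetahlh-\tz)^2]\le\sigmaus+\E[\sup_m\{2\lambdah(m)W(m)U+(\lambdah(m)W(m))^2\}]$ and then computes that inner supremum by a first-order analysis of the rational function of $m$, enumerating stationary points and arriving at the value $S\abs{U}+\tfrac14 S^2$. You obtain the \emph{same} pointwise bound without any optimization, because the single AM--GM step $W^2+D^2\ge 2\abs{W}D$ already gives $\abs{\lambdah W}\le\tfrac12 S$, from which $(U+\lambdah W)^2\le U^2+S\abs{U}+\tfrac14 S^2$ follows by the triangle inequality. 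In effect you have found a one-line replacement for the stationary-point lemma: the supremum the paper computes is attained precisely at $\abs{W}=D$, which is exactly where AM--GM is tight, so nothing is lost. What your approach buys is brevity and a unified treatment of~\Cref{thmthm:mse_bound_cov} and~\Cref{thmthm:mse_bound_cov_unknown_var}; what the paper's approach buys is the explicit identification of the worst-case bias $m^\ast$ (used later in the proof of~\Cref{prop:converge_unbounded_bias} for dominated convergence), which your argument does not supply but also does not need here.
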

To build intuition for the behavior of the worst-case bound, we can also state the following corollary, which gives the bound in terms of the underlying variance / covariance of the estimators $\tuh, \tbh$, if those quantities are known.
\begin{restatable}[Bound on MSE with known variance/covariance]{corollary}{MSEBoundCov}\label{thmcorr:mse_bound_cov}
  Under Assumptions~\ref{asmp:unbiased_u} and~\ref{asmp:nonzero-variance-diff}, and where $\sigmaus, \sigmabs, \sigmacov$ are known, define $c, \rho$ by $c \coloneqq \sigmab / \sigmau$ and $\rho = \sigmacov / \sqrt{\sigmaus \sigmabs}$, where $\rho = 0$ if $\sigmabs = 0$.  Then the MSE of the estimator $\thlh$ is bounded by 
  \begin{equation}
    \E[{(\thetahlh - \tz)}^2] \leq \sigmaus {\left(1 + \frac{1}{2} \frac{\abs{1 - \rho c}}{\sqrt{1 - 2 \rho c + c^2}}\right)}^2
  \end{equation}
\end{restatable}
\Cref{thmcorr:mse_bound_cov} gives us the intuition that the worst-case bound is largest when the biased estimator has favorable variance properties. For instance, if the estimators are independent $(\rho = 0)$, the worst-case relative MSE is a function of $c = \sigmab / \sigmau$, and the upper bound is \textbf{larger} when the variance $\sigmabs$ of the biased estimator is \textbf{smaller}. Intuitively, this reflects the fact that $\lambdah$ is larger (for any fixed bias) when the variance of the biased estimator is small.  Notably, the maximum value of this worst-case bound for $\rho = 0$ occurs when $c = 0$, and is equal to $2.25 \sigmaus$, a relatively small multiple of the MSE of the unbiased estimator, considering that it holds for any bias.

The intuition behind the proof of \cref{thmthm:mse_bound_cov_unknown_var,thmcorr:mse_bound_cov} is also instructive: We imagine an adversary who replaces the value of $\tbh$ with an adversarially chosen value. The optimal adversarial choice is to place $\tbh$ within one standard deviation (of the difference $\tbh - \tuh$) of $\tuh$, in a direction that pulls the estimate $\thlh$ away from the value of $\tz$ (see~\cref{lemma:MSEBoundCov_supremum}). Indeed, for large values of the bias, $\lambdah$ will tend to zero, and $\thlh$ will tend to the unbiased estimator $\tuh$, as formalized in the following.
\begin{restatable}{proposition}{ConvergeUnboundedBias}\label{prop:converge_unbounded_bias}
 Consider a sequence of biased estimators $\tbhk$ which can be written as $\tbh' + \mu_k$, where $\E[\tbh'] = \tz$, where $\cov(\tbhk, \tuh) = \sigmacov$, and $\var(\tbhk) = \sigmabs$.  Let $\mu_k \rightarrow \infty$ as $k \rightarrow \infty$. The MSE of the resulting sequence of estimators $\thetahlhk$ converges to the MSE of the unbiased estimator $\tuh$
\begin{equation}
  \lim_{k \rightarrow \infty} \E\left[{\left(\thetahlhk - \tz\right)}^2\right] = \sigmaus
\end{equation}
where $\thetahlhk = \lambdah \tbhk + (1 - \lambdah) \tuh$, and where $\lambdah = (\sigmaus - \sigmacov) / ({(\tuh - \tbh)}^2 + \sigmaus + \sigmabs - 2 \sigmacov)$.
\end{restatable}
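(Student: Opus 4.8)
The plan is to isolate the correction term in the error of $\thetahlhk$, show that it is uniformly bounded and tends to zero, and then pass to the limit inside the expectation. Write
\[
\thetahlhk - \tz = (\tuh - \tz) + R_k, \qquad R_k \coloneqq \lambdah^{(k)}(\tbhk - \tuh),
\]
where $\lambdah^{(k)} = (\sigmaus - \sigmacov) / ({(\tuh - \tbhk)}^2 + \sigmaus + \sigmabs - 2\sigmacov)$ is the $k$-th weight. Substituting $\tbhk = \tbh' + \mu_k$ and setting $W \coloneqq \tuh - \tbh'$ and $C \coloneqq \sigmaus + \sigmabs - 2\sigmacov$, one has $\E[W] = 0$ (since $\E[\tuh] = \E[\tbh'] = \tz$), $\var(W) = C$, and $C > 0$ by Assumption~\ref{asmp:nonzero-variance-diff}; moreover $\tbhk - \tuh = \mu_k - W$ and ${(\tuh - \tbhk)}^2 = {(W - \mu_k)}^2$, so that
\[
R_k = \frac{-(\sigmaus - \sigmacov)(W - \mu_k)}{{(W - \mu_k)}^2 + C}.
\]

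First I would record a deterministic, $k$-independent bound: by the AM--GM inequality ${(W - \mu_k)}^2 + C \ge 2\sqrt{C}\,\abs{W - \mu_k}$, hence $\abs{R_k} \le \abs{\sigmaus - \sigmacov}/(2\sqrt{C})$ pointwise for every $k$. Next I would show $R_k \to 0$. Since $\E[W] = 0$ and $\var(W) = C$, Chebyshev's inequality gives $\P(\abs{W} > \mu_k/2) \le 4C/\mu_k^2 \to 0$; on the complementary event $\abs{W - \mu_k} \ge \mu_k/2$, so $\abs{R_k} \le 2\abs{\sigmaus - \sigmacov}/\mu_k \to 0$. Thus $R_k \to 0$ in probability --- and in fact almost surely when $\tbh'$ is the same for all $k$, because then $W$ is a fixed finite random variable with $W - \mu_k \to -\infty$. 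Combining the uniform bound with this convergence (dominated convergence, or simply that a uniformly bounded sequence converging in probability also converges in $L^2$) yields $\E[R_k^2] \to 0$.

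It remains to expand the square. Using Assumption~\ref{asmp:unbiased_u} to write $\E[{(\tuh - \tz)}^2] = \sigmaus$,
\[
\E\left[{(\thetahlhk - \tz)}^2\right] = \sigmaus + 2\,\E[(\tuh - \tz)R_k] + \E[R_k^2].
\]
The last term vanishes as just shown, and the cross term is controlled by Cauchy--Schwarz, $\abs{\E[(\tuh - \tz)R_k]} \le \sqrt{\sigmaus}\,\sqrt{\E[R_k^2]} \to 0$. Therefore $\E[{(\thetahlhk - \tz)}^2] \to \sigmaus$, which is the claim.

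The only genuine obstacle is justifying the interchange of limit and expectation, and this is precisely what the pointwise bound $\abs{R_k} \le \abs{\sigmaus - \sigmacov}/(2\sqrt{C})$ together with $R_k \to 0$ secures; the cross term is then handled by Cauchy--Schwarz rather than needing its own domination argument. A minor point of care is the reading of the hypothesis: if a single $\tbh'$ is used for all $k$, the convergence $R_k \to 0$ is almost sure, whereas if only the second moments $\var(\tbhk) = \sigmabs$ and $\cov(\tbhk, \tuh) = \sigmacov$ are held fixed, one argues convergence in probability and then invokes uniform boundedness --- either route arrives at $\E[R_k^2] \to 0$, and the remainder of the argument is unchanged.
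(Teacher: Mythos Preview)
Your argument is correct. The overall strategy---show the correction term goes to zero pointwise, dominate it, and pass to the limit---is the same as the paper's, but the execution differs in a useful way. The paper applies dominated convergence directly to ${(\thetahlhk - \tz)}^2$ and obtains its dominating random variable by invoking \Cref{lemma:MSEBoundCov_supremum}, which required a first-order optimality analysis over all possible biases. You instead isolate $R_k = \lambdah^{(k)}(\tbhk - \tuh)$, bound it by the constant $\abs{\sigmaus - \sigmacov}/(2\sqrt{C})$ via the one-line AM--GM inequality ${(W-\mu_k)}^2 + C \ge 2\sqrt{C}\,\abs{W-\mu_k}$, and then handle the cross term separately with Cauchy--Schwarz. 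This is more elementary and entirely self-contained: it recovers the same dominating constant that appears in the paper's bound without appealing to the supremum lemma, at the minor cost of tracking two terms (the square and the cross term) instead of one. Either route is fine; yours would be preferable in a presentation that had not already established the lemma.
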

\Cref{prop:converge_unbounded_bias} formally demonstrates that one of the patterns observed in our simulations (that the relative MSE converges to 1 as the bias grows without bound) holds more generally.

\section{Using Simulation to Assess the Maximum Allowable Bias}%
\label{sec:semirealistic_synthetic_experiment}

While~\cref{sec:theoretical} provides bounds on worst-case performance, we may wonder whether the bias thresholds are high enough in practice to warrant application of the method.  In practice, we recommend simple simulations to build intuition on this point, prior to using a combination strategy.  To illustrate, we construct a simulation where the parameters are designed to mimic the observed statistics of the SPRINT Trial \citep{SPRINT_Research_Group2015-gb}.  In this simulated scenario, we observe that the reference estimator out-performs the unbiased estimator when the confounding bias parameter $\gamma$ in the Rosenbaum sensitivity model \citep{Rosenbaum2010-qw} is less than 1.

The SPRINT Trial investigated the effectiveness of two different targets for systolic blood pressure ($<$120mm Hg, the \enquote{intensive} treatment, and $<$140mm Hg, the \enquote{standard} treatment) among non-diabetic patients with high cardiovascular risk.  We take $T = 0$ to denote the standard regime, and $T = 1$ to denote the intensive regime. Several of the outcomes considered in this trial are time-to-event outcomes: The primary composite outcome is comprised of myocardial infarction, other acute coronary syndromes, stroke, heart failure, or death from cardiovascular causes. For simplicity, we consider this outcome as a binary variable.  We take $Y = 1$ to denote the presence of the primary composite outcome. Note that in this simulation, the true value of the treatment effect is $\tz = \E[Y_1 - Y_0] = -0.0164$, a decrease of 1.64\% in the absolute risk of the primary outcome, chosen to match the statistics observed in the trial.  Here we use standard potential outcome notation, where $Y_t$ represents the potential outcome under treatment $t$.

\textbf{Creating an observational dataset with realistic confounding}: To construct confounded observational data, we first define a unobserved confounder. We use the reported trial statistics to calibrate the strength of the association between this confounder and the potential outcomes.  The trial reports the incidence of the primary outcome across both arms for several sub-groups (see Figure 4 of \citet{SPRINT_Research_Group2015-gb}). To emulate a plausible binary confounder, we consider previous chronic kidney disease (CKD), which has the smallest p-value for an interaction effect.  Taking $U = 1$ as presence of previous CKD, we then take the observed incidence of $Y$ in treatment and control, across these two subpopulations (from Figure 4 of \citet{SPRINT_Research_Group2015-gb}), as the values of $\E[Y_1 \mid U]$ and $\E[Y_0 \mid U]$ in our simulation.  We provide additional details in~\Cref{sec:sprint_simulation_details}. 

Given a pre-defined effect of $U$ on $Y_t$, we introduce confounding in the observational study via the following model for treatment selection $\P(T = 1 \mid U, D = O) = \text{logit}^{-1}(\gamma (U - 1/2))$, where we use $D = O$ to denote a data-point drawn from the observational study, and where the intercept is chosen to keep the log-odds symmetric around 0 for $U = 1, U = 0$. For $\gamma > 0$, patients with a history of CKD are more likely to receive intensive management.  There are no other covariates for simplicity.  This model of confounding can be viewed in the Rosenbaum sensitivity model \citep[See 4.2 of ][]{Rosenbaum2010-qw}, satisfying the bound
\begin{equation}
  \Gamma^{-1} \leq \frac{\P(T = 1 \mid U = 1, D = O) \P(T = 0 \mid U = 0, D = O)}{\P(T = 0 \mid U = 1, D = O) \P(T = 1 \mid U = 0, D = O)} \leq \Gamma
\end{equation}
with $\Gamma = \exp(\gamma)$.

\textbf{Simulation of estimator performance}: Based on the generative model above, we simulate $\nexp = 9361$ samples from the simulated trial (the size of the original trial), and a ten-fold larger amount from an observational study, $\nobs = 100000$.  We note that the generative model is identical (e.g., the distribution of $U$) except for the treatment assignment mechanism, and we examine the performance of the reference estimator as we vary the confounding bias $\gamma$.  In each dataset, the estimators $\tuh, \tbh$ are constructed by standard propensity score adjustment, with details of variance estimation given in~\cref{sec:sprint_simulation_details}. For each value of $\gamma$, we repeat this process 10000 times, where each iteration gives us one observation of the squared error for the reference estimator and the RCT estimator. We perform this procedure for 20 values of $\gamma$ evenly spaced between $0$ and $2$. For each value of $\gamma$, we obtain the corresponding value of $\Gamma$ as $\Gamma = \exp(\gamma)$.

\textbf{Results}: In Figure~\ref{fig:sprint_simulation_gamma} we compare the root mean-squared error (RMSE) of $\tuh$ and $\thlh$, for each value of $\gamma$.  In this particular scenario, we see that $\thlh$ improves on the performance of the unbiased estimator $\tuh$ in the regime where $\gamma < 1$, and otherwise tends to perform similarly. For reference, the impact of CKD on the composite outcome in the control group, also measured on the log-odds scale, is 0.55. In~\cref{sec:additional_sprint_results} we additionally vary the sample size $\nobs \in \{10000, 20000, 50000, 100000\}$, and observe that the maximum allowable value of $\gamma$ decreases slightly as the sample size increases. 

\begin{figure}[t]
  \centering
  \includegraphics[width=0.7\linewidth]{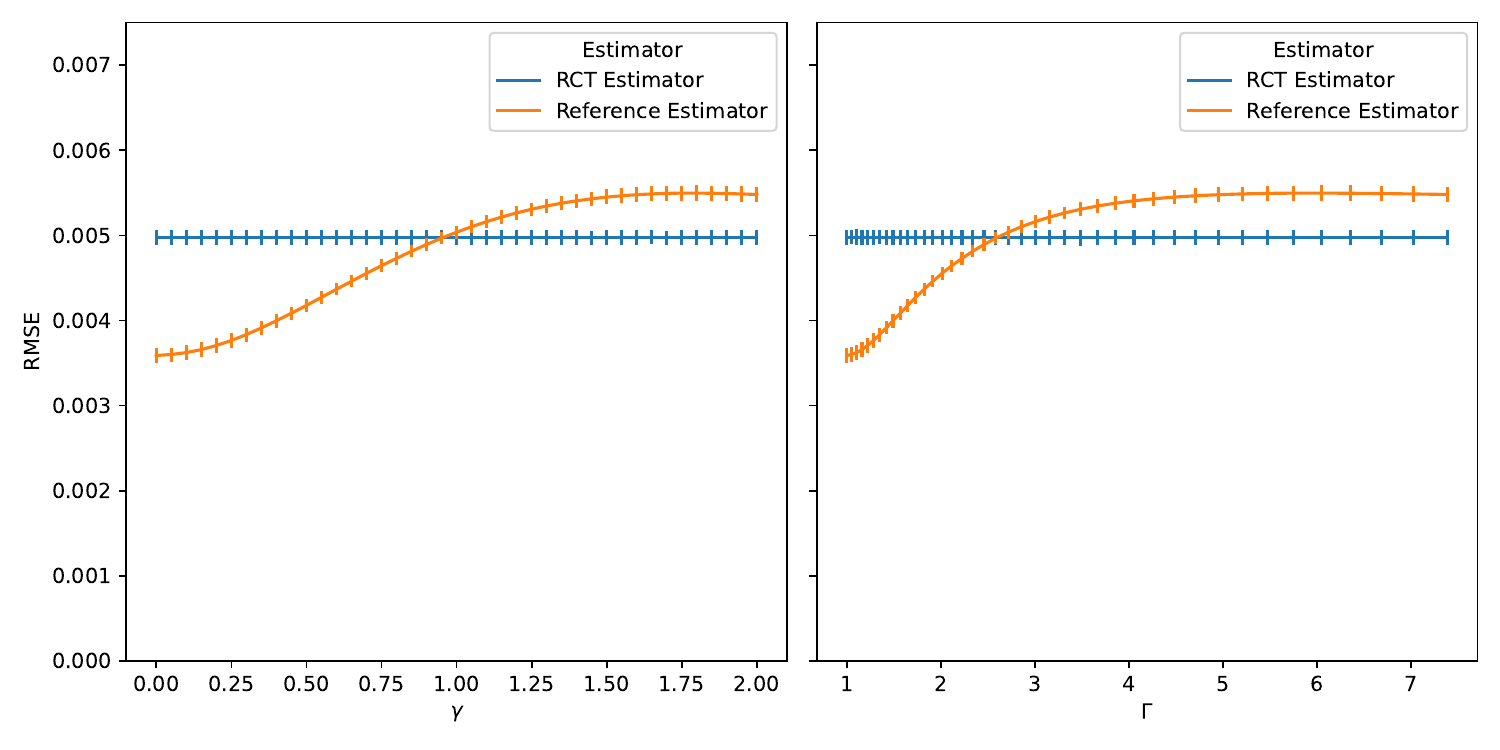}
  \caption{For each value of $\gamma$, and corresponding value of $\Gamma$, we show the root mean-squared error (RMSE) for each estimator, calculated over 10k simulations. 95\% confidence intervals are obtained via bootstrapping.}%
  \label{fig:sprint_simulation_gamma}
\end{figure}

\section{Discussion}%
\label{sec:discussion}

Many estimators are only consistent under strong assumptions, which we rarely believe to hold exactly in practice.  Given a potentially biased (but likely low-variance) estimator, and an unbiased (but likely high variance) estimator, we can seek a combination that performs better than the unbiased estimator alone.  We have discussed several recent proposals for doing so, in the context of causal inference, and discussed a simple baseline (the reference estimator) which requires no hyperparameter tuning, and which has reasonable worst-case guarantees on estimation performance.

We also introduced a different conceptual perspective for evaluating the use of such estimators, examining their finite-sample behavior as a function of the (unknown) bias, and we give a detailed simulation-based investigation of how several estimators perform along these lines.  In general, we advocate for the use of simple simulations in practice, to assess the relative trade-offs (and bias tolerance) of different estimators on data that is designed to resemble the problem at hand.

\paragraph{Acknowledgements} MO and DS were supported in part by Office of Naval Research Award No. N00014-21-1-2807. We thank Avi Feller and participants at the American Causal Inference Conference 2022 for providing feedback on an earlier version of this manuscript. 

\clearpage
\appendix
\section*{Appendix}%
\label{sec:appendix}

\section{Details of Motivating Examples}%
\label{sec:details_of_motivating_examples}

In~\cref{sec:an_unbiased_experimental_estimator} we describe an illustrative unbiased estimator, and in~\cref{sec:combining_surrogates_and_primary_outcomes_in_experimental_data,sec:combining_ate_estimates_derived_from_observational_and_experimental_samples} we describe two different estimators that make use of additional observational samples.

\textbf{Notation} We use $A$ to denote a binary action, $Y$ an outcome, $Y_a$ a potential outcome, and $X$ covariates.  We use $V$ to denote the full set of observed variables, such that we may have $V_i = (A_i, X_i, Y_i)$.  We use script characters to indicate the support of a random variable, e.g., $\cA = \{0, 1\}$. We use $\Dobs = \{V_i\}_{i=1}^{\nobs}$ to denote an observational sample of size $\nobs$, and $\Dexp = \{V_i\}_{i=1}^{\nexp}$ to denote an experimental sample of size $\nexp$, with a total sample size of $\nobs + \nexp = n$.  We use the random variable $D$ to denote the population from which a sample is drawn, with $D = O$ indicating membership in an observational sample, and $D = E$ indicating membership in the experimental sample. 

\subsection{An Unbiased Experimental Estimator}%
\label{sec:an_unbiased_experimental_estimator}

We assume throughout that there exists a consistent estimator $\tuh$ of the causal effect $\tz$, derived entirely from experimental data. For simplicity in the motivating examples that follows, we consider estimation of the causal effect in the experimental population
\begin{equation}\label{eq:causal_effect}
  \tz = \E[Y_1 - Y_0 \mid D = E],
\end{equation}
and assume that experimental sample is randomized with a fixed (known) probability of treatment.
\begin{assumption}[Identification of ATE in the Experimental Sample]\label{assmp:experimental_validity}
  The following hold:
  \begin{enumerate*}[label= (\roman*)]
    \item Consistency: $Y = Y_a$ when $A = a, D = E$,
    \item Ignorability: $Y_a \indep A \mid D = E$,
    \item Positivity: $\P(A = a \mid D = E) > 0$ for all $a \in \{0, 1\}$.
  \end{enumerate*}
\end{assumption}
Under Assumption~\ref{assmp:experimental_validity}, the causal effect can be estimated in an unbiased fashion using
\begin{equation}
  \tuh = \frac{1}{n} \sum^{n}_{i=1} \frac{\1{D_i = E}}{\hat{P}(D = E)} \left( \frac{Y_i A_i}{e} - \frac{Y_i (1 - A_i)}{1 - e}\right)
\end{equation}
where $e$ is the (known) probability of treatment assignment in the experimental sample, and $\hat{P}(D = E)$ is the empirical estimate $\nexp / (\nexp + \nobs)$.\footnote{Note that we write this as an average over the entire sample of size $\nexp + \nobs$, to be consistent with later notation, but the above is equivalent to taking the sample average of the pseudo-outcome over the experimental sample alone.} We consider this formulation for simplicity, because it yields an experimental estimator that is not only consistent, but is unbiased in finite samples.

\subsection{Combining Surrogates and Primary Outcomes in Experimental Data}%
\label{sec:combining_surrogates_and_primary_outcomes_in_experimental_data}

There are several settings in which we can construct an alternative estimator based on observational data.  We give one example here, and another in~\cref{sec:combining_ate_estimates_derived_from_observational_and_experimental_samples}.

Suppose that we have additional surrogate outcomes $S$ in the experimental sample, and an observational sample $\Dobs = {\{s_i, x_i, y_i\}}_{i=1}^{\nobs}$ containing surrogates, covariates, and outcomes, but no information on treatment.  We write $S_a$ to denote the potential surrogate outcome under treatment $A = a$. In this context, we can construct an alternative estimator of the causal effect, by using a \textit{surrogate index} estimator \citep{Athey2019-km} that leverages the observational sample to learn the causal relationship between the short-term surrogate outcomes $S$ and the long-term outcome $Y$. In contrast to the setting of \citet{Athey2019-km}, we assume that the outcome $Y$ is available in both experimental and observational samples, and consider using the surrogate only to obtain a higher-precision estimator.  For such an approach to yield an unbiased estimator, we require a few assumptions.
\begin{assumption}\label{assmp:surrogate_validity}
  The following conditions hold, in addition to those of Assumption~\ref{assmp:experimental_validity}:
  \begin{enumerate*}[label= (\roman*)]
    \item Unconfounded Treatment Assignment: $A \indep (Y_1,Y_0,S_1,S_0) \mid D = E$,
    \item Surrogacy: $A \indep Y \mid S, X, D = E$,
    \item Comparability: $D \indep Y \mid S, X$,
    \item Overlap: $\P(D = E \mid S = s, X = x) > 0$ for all $s \in \cS, x \in \cX$.
  \end{enumerate*}
\end{assumption}
Under Assumption~\ref{assmp:surrogate_validity}, which corresponds to Assumptions 1--4 of \citet{Athey2019-km}, the following quantities are equivalent in the experimental sample 
\begin{equation}
  \E[Y_a \mid D = E] = \E[Y \mid A, D = E] = \E[ h(S, X) \mid A, D = E]
\end{equation}
where $h(S, X) \coloneqq \E[Y \mid S, X, D = O]$ is referred to as the surrogate index \citep[See Theorem 1 of ][]{Athey2019-km}. Assumption~\ref{assmp:surrogate_validity} is reflected in Figure~\ref{subfig:surrogacy_graph}, which captures the fact that $Y$ is conditionally independent of both $A$ and $D$ given $S, X$, implying that there is no \textit{direct effect} of treatment on the outcome $Y$.

\begin{figure}[t]
\centering
  \begin{subfigure}[t]{0.5\textwidth}
    \centering
    \begin{tikzpicture}[
      obs/.style={circle, draw=gray!90, fill=gray!30, very thick, minimum size=5mm}, 
      uobs/.style={circle, draw=gray!90, fill=gray!10, dotted, minimum size=5mm}, 
      bend angle=30]
      \node[obs] (A) {$A$} ;
      \node[obs] (D) [below=of A] {$D$} ;
      \node[obs] (S) [right=of A]  {$S$};
      \node[obs] (Y) [right=of S]  {$Y$};
      \node[obs] (X) [below=of S] {$X$} ;
      \draw[-latex, thick] (X) -- (Y);
      \draw[-latex, thick] (X) -- (S);
      \draw[-latex, thick] (A) -- (S);
      \draw[-latex, red] (A) to[bend left] (Y);
      \draw[latex-latex, dotted, red] (S) to[bend right] (Y);
      \draw[-latex, thick] (S) -- (Y);
      \draw[-latex, thick] (D) -- (S);
      \draw[-latex, thick] (D) -- (X);
    \end{tikzpicture}
    \caption{}%
    \label{subfig:surrogacy_graph}
  \end{subfigure}%
  \begin{subfigure}[t]{0.5\textwidth}
    \centering
    \begin{tikzpicture}[
      obs/.style={circle, draw=gray!90, fill=gray!30, very thick, minimum size=5mm}, 
      uobs/.style={circle, draw=gray!90, fill=gray!10, dotted, minimum size=5mm}, 
      bend angle=30]
      \node[obs] (A) {$A$} ;
      \node[obs] (D) [below=of A] {$D$} ;
      \node[obs] (Y) [right=of A]  {$Y$};
      \node[obs] (X) [below=of Y] {$X$} ;
      \draw[-latex, thick] (X) -- (Y);
      \draw[-latex, thick] (X) -- (A);
      \draw[-latex, thick] (A) -- (Y);
      \draw[-latex, thick] (D) -- (A);
      \draw[-latex, red, thick] (D) -- (Y);
      \draw[latex-latex, red, dotted] (A) to[bend left] (Y);
      \draw[-latex, thick] (D) -- (X);
    \end{tikzpicture}
    \caption{}%
    \label{subfig:obs_exp_graph}
  \end{subfigure}
  \caption{(\subref{subfig:surrogacy_graph}) A causal graph consistent with Assumption~\ref{assmp:surrogate_validity}, where the red arrow denotes an illustrative edge that is \textbf{not} permitted: A direct effect of the treatment $A$ on the outcome $Y$, and the dotted bi-directional arrows illustrates another potential violation of assumptions due to unmeasured confounding. (\subref{subfig:obs_exp_graph}) A causal graph consistent with Assumption~\ref{assmp:internal_external_validity}, where the red arrows similarly indicate violations of the assumption.}%
\label{fig:}
\end{figure}

\begin{figure}[t]
\end{figure}
This provides us several possible methods for estimating the causal effect using a combination of the observational and experimental sample.  Here we give one simple estimator as an illustrative example
\begin{equation}
  \tbh = \frac{1}{n} \sum^{n}_{i=1} \frac{\1{D_i = E}}{\hat{P}(D = E)} \frac{\hat{h}(S_i, X_i) A_i}{e} - \frac{\hat{h}(S_i, X_i) (1 - A_i)}{1 - e}
\end{equation}
which takes the same form as $\tuh$, with $Y_i$ is replaced by $\hat{h}(S_i, X_i)$. Here if $\hat{h}(S_i, X_i)$ is a consistent estimator for the conditional expectation $h(S, X)$, and if Assumption~\ref{assmp:surrogate_validity} holds, then $\tbh$ is a consistent estimator of $\tz$.  We refer to $\tbh$ in this section as the surrogate index estimator. Asymptotic bias in $\tbh$ can arise due to violations of Assumption~\ref{assmp:surrogate_validity}, as illustrated in Figure~\ref{subfig:surrogacy_graph}, e.g., if there is a direct effect of the treatment $A$ on the outcome $Y$, or unmeasured confounding between the surrogates and outcome. 

\subsection{Combining ATE estimates from Observational and Experimental Samples}%
\label{sec:combining_ate_estimates_derived_from_observational_and_experimental_samples}

Alternatively, suppose we have access to a much larger observational study (with the same treatment, outcome, and covariates), whose support covers the RCT population, and which can be used to estimate the ATE in the RCT-population $\tz$ (Eq.~\ref{eq:causal_effect}).\footnote{For simplicity in these examples, we do not consider the causal effect in the observational population $\E[Y_1 - Y_0 \mid D = O]$, which requires assumptions about external validity of the RCT (e.g., that potential outcomes are conditionally independent of $D$ given $X$, that overlap holds between the studies, etc).  Since we focus on the setting where one estimator is known with high confidence to be consistent or unbiased, we focus on the setting where the causal effect of interest is defined with respect to the RCT population $\E[Y_1 - Y_0 \mid D = E]$, requiring no additional assumptions on the RCT population beyond Assumption~\ref{assmp:experimental_validity}.} In particular, if the causal effect can be identified in the observational study (internal validity) and transported to the RCT population (external validity), then we can hope to use the observational data to construct an alternative estimator of $\tz$, which might be expected to have smaller variance, especially if $\nobs$ is substantially larger than $\nexp$. These assumptions are formalized in Assumption~\ref{assmp:internal_external_validity}.
\begin{assumption}[Internal and External Validity of Observational Study]\label{assmp:internal_external_validity}
  The following conditions hold, in addition to those of Assumption~\ref{assmp:experimental_validity}:
  \begin{enumerate*}[label= (\roman*)]
    \item Consistency: $A = a, D = O \implies Y_a = Y$,
    \item Unconfounded Treatment Assignment: $Y_a \indep A \mid X, D = O$, $\forall a \in \cA$,
    \item Positivity of Treatment Assignment: $\P(X = x \mid D = O) \implies$ $\P(A = a \mid X = x, D = O) > 0$, $\forall a \in \cA, x \in \cX$,
    \item Unconfounded Selection: $Y_a \indep D \mid X$, $\forall a \in \cA$,
    \item Positivity of Selection: $\P(X = x) > 0 \implies$ $\P(D = d \mid X = x) > 0$, $\forall d \in \{O, E\}, x \in \cX$.
  \end{enumerate*}
\end{assumption}
The particulars of Assumption~\ref{assmp:internal_external_validity} can be found in the literature on transportability of causal effects (see \citet{Degtiar2021-bb} for a recent review).  Under Assumption~\ref{assmp:internal_external_validity}, the observational data can be used to estimate the causal effect using a variety of estimators.  For instance, one can estimate the causal effect as 
\begin{equation}
  \tbh = \frac{1}{n} \sum^{n}_{i=1} \frac{\1{D_i = E}}{\hat{P}(D = E)} \times (\hat{g}_1(X_i) - \hat{g}_0(X_i))
\end{equation}
where $\hat{g}_a(X)$ is an estimate of the conditional expectation under treatment in the observational population $g_{a}(X) \coloneqq \E[Y \mid A = a, D = O, X]$ \citep{Dahabreh2020-ua}. One can alternatively construct re-weighting estimators that use only the observational dataset, or doubly-robust estimators that combine the two (see Section 5 of \citet{Dahabreh2020-ua} for examples). Here we note that, with the exception of re-weighting estimators that use only the observational sample, the estimators $\tuh$ and $\tbh$ are not independent, due to the shared use of experimental samples. Note that $\tbh$ may not be consistent for $\tz$ if Assumption~\ref{assmp:internal_external_validity} fails, e.g., due to confounding in treatment assignment or selection.

\section{Proofs}%
\label{sec:proofs}

\subsection{Proof of Minor Claims}%
\label{sec:proof_of_minor_claims}

In~\Cref{sec:motivation_and_setup}, we claimed that $\lambda^*$ takes a particular form.  For completeness, we provide a proof here, but this is a known fact from the literature.

\begin{proof}
  Suppose that $\E[\tbh] = \tz + \mu$, and that $\E[\tuh] = \tz$. The corresponding MSE is given by 
\begin{align}
  &\text{MSE}(\lambda \tbh + (1 - \lambda) \tuh) \\
  &= \E[{(\lambda (\tbh - \tz) + (1 - \lambda) (\tuh - \tz))}^2]\\
  &= \E[\lambda^2 {(\tbh - \tz)}^2 + {(1 - \lambda)}^2 {(\tuh - \tz)}^2 + 2 \lambda(1 - \lambda) (\tbh - \tz) (\tuh - \tz)]\\
  &= \lambda^2 (\mu^2 + \var(\tbh)) + {(1 - \lambda)}^2 \var(\tuh) + 2 \lambda (1 - \lambda) \cov(\tbh, \tuh) \label{eq:quadratic_mse_in_lambda}
\end{align}
where we observe that $\E[{(\tbh - \tz)}^2] = \E[{(\tbh - \tb + \tb - \tz)}^2] = \mu^2 + \var(\tbh)$, and where we use the fact that 
\begin{align*}
  \E[(\tbh - \tz)(\tuh - \tz)] &= \E[(\tbh - (\tz + \mu))(\tuh - \tz) + \mu(\tuh - \tz)] \\
                           &= \E[(\tbh - (\tz + \mu))(\tuh - \tz)] + \E[\mu(\tuh - \tz)] \\
                           &= \E[(\tbh - (\tz + \mu))(\tuh - \tz)] \\
                           &= \cov(\tbh, \tuh)
\end{align*}
Note that Equation~\eqref{eq:quadratic_mse_in_lambda} is a quadratic in $\lambda$, which is minimized by setting the derivative equal to zero
\begin{align}
  0 &= 2 \lambda (\mu^2 + \var(\tbh)) - 2 (1 - \lambda) \var(\tuh) + (2 - 4\lambda)\cov(\tbh, \tuh) \\
    &= 2 \lambda (\mu^2 + \var(\tbh)) +2 \lambda \var(\tuh) - 4 \lambda \cov(\tbh, \tuh) - 2 \var(\tuh) + 2\cov(\tbh, \tuh) \\
  \implies \lambda^* &= \frac{\var(\tuh) - \cov(\tbh, \tuh)}{\mu^2 + \var(\tbh) + \var(\tuh) - 2\cov(\tbh, \tuh)} \label{eq:optimal_lambda_unobservable}
\end{align}
\end{proof}

In~\cref{sec:theoretical}, we claim that the conditions of~\cref{thm:consistency} are easily satisfied if $\tuh$ and $\tbh$ are regular and asymptotically normal estimators.  We formalize this in the following assumption.

\begin{assumption}[Asymptotic Regularity Conditions]\label{asmp:asymptotic_linearity}
  We assume that the estimators have an asymptotically linear representation
  \begin{equation}
    \srn\begin{pmatrix}
      \tuh - \tz \\ \tbh - (\tz + \mu)
    \end{pmatrix} = \frac{1}{\srn} \sum^{n}_{i=1} \begin{pmatrix}
      \infu(Z_i) \\ \infb(Z_i)
    \end{pmatrix} + o_p(1)
  \end{equation}
  where the random vector $(\infu(Z), \infb(Z))$ has zero-mean and bounded covariance, and where $\mu$ is a finite constant. Note that $n$ here denotes the total sample size (e.g., of a combined experimental / observational sample).  Similarly, we assume that $\tuh, \tbh$ are asymptotically normal, in that 
  \begin{equation}
    \srn\begin{pmatrix}
      \tuh - \tz \\ \tbh - (\tz + \mu)
    \end{pmatrix} \cid \cN\left(0, \begin{bmatrix}
    \var(\infu) & \cov(\infu, \infb) \\
    \cov(\infu, \infb) & \var(\infu)
    \end{bmatrix}\right)
  \end{equation}
\end{assumption}

This assumption requires that both $\tuh$ and $\tbh$ are asymptotically normal for their respective limits $\tz + \mu$ and $\tz$, a condition that is broadly satisfied by commonly used estimators. This formulation also suggests a natural choice for estimating the variance of $\tuh$, the variance of $\tbh$, and the covariance, by first estimating 
\begin{align*}
  \hat{\var}(\infu) &= \frac{1}{n} \sum^{n}_{i=1} {(\infuh(Z_i))}^2 & \hat{\var}(\infb) &= \frac{1}{n} \sum^{n}_{i=1} {(\infbh(Z_i))}^2 & \hat{\cov}(\infu, \infb) &= \frac{1}{n} \sum^{n}_{i=1} \infuh(Z_i) \infbh(Z_i),
\end{align*}
where $\infuh, \infbh$ are consistent plug-in estimators $\infuh \cip \infu, \infbh \cip \infb$ for their respective influence functions.  The components of $\lambdah$ are then given by 
\begin{align*}
  \sigmauhs &= n^{-1} \hat{\var}(\infu) & \sigmabhs &= n^{-1} \hat{\var}(\infb) & \sigmacovh &= n^{-1} \hat{\cov}(\infu, \infb)
\end{align*}
and the conditions of~\cref{thm:consistency} are satisfied, as $\hat{\var}(\infu), \hat{\var}(\infb), \sigmacovh$ all converge to a constant by the weak law of large numbers.  They are also consistent in this case, but that is not required for~\cref{thm:consistency} to hold.

\subsection{Proofs of Main Results}%
\label{sec:proofs_of_main_results}

\Consistency*
\begin{proof}
  To see that $\lambdah \cip 0$ when $\mu \neq 0$, we can observe that 
  \begin{align*}
    \lambdah &= \frac{n\sigmauhs - n\sigmacovh}{n{(\tuh - \tbh)}^2 + n\sigmauhs + n\sigmabhs -2n\sigmacovh},
  \end{align*}
  which converges to zero by the fact that $\tuh - \tbh \cip \mu$, $n \sigmauhs \cip \nu_u$, $n \sigmabhs \cip \nu_b$, and $n \sigmacovh \cip \nu_{bu}$.  The denominator diverges to infinity due to the extra factor of $n$ in the denominator, i.e., $n{(\tuh - \tbh)}^2 \cip \infty$, and the result follows from the continuous mapping theorem. The fact that $\lambdah \cip 0$ when $\mu \neq 0$ is sufficient to conclude that $\thlh \cip \tz$ by another application of the continuous mapping theorem to the expression $\thlh = \tuh + \lambdah(\tbh - \tuh)$, and the fact that $\tbh - \tuh \cip \mu$.
\end{proof}

\newcommand{\m}{{\color{RawSienna}\delta}}
\newcommand{\ms}{{\color{RawSienna}\delta^{\ast}}}

\MSEBoundCovUnknownVar*
\begin{proof}
To derive the upper bound, we consider a perturbed version $\thlh(\m)$ of $\thlh$ that is parameterized by $\m \in \R$, where $\thlh(0)$ corresponds to the unperturbed $\thlh$, and where $\thlh(\m)$ is defined as 
\begin{align}
  \thlh(\m) &\coloneqq \lambdah(\m) (\tbh + \m) + (1 - \lambdah(\m)) \tuh & \lambdah(\m) &\coloneqq \frac{\sigmauhs - \sigmacovh}{{(\m + \tbh - \tuh)}^2 + \sigmauhs + \sigmabhs - 2 \sigmacovh}
\end{align}
This perturbation is equivalent to replacing $\tbh$ with $\tbh + \m$ as an input to the reference estimator $\thlh$, without changing any of the other inputs.  Because $\thlh(0) = \thlh$, we have it that
\begin{equation*}
    {(\thlh - \tz)}^2 \leq {\displaystyle \sup_{\m}} {(\thlh(\m) - \tz)}^2,
\end{equation*}
and so by the monotonicity of expectations,
  \begin{align}
    \E[{(\thlh - \tz)}^2] &\leq \E\left[ \sup_{\m} {(\thetahlh(\m) - \tz)}^2 \right]\nonumber\\
    &= \E\left[ \sup_{\m} {(\tuh - \tz + \lambdah(\m)(\m + \tbh - \tuh))}^2 \right]\nonumber\\
    &= \E\left[ \sup_{\m} \left({(\tuh - \tz)}^2 + {(\lambdah(\m)(\m + \tbh - \tuh))}^2 + 2(\tuh - \tz)(\lambdah(\m)(\m + \tbh - \tuh)) \right)\right]\nonumber\\
    &= \E\left[{(\tuh - \tz)}^2\right] + \E\left[ \sup_{\m} \left({(\lambdah(\m)(\m + \tbh - \tuh))}^2 + 2(\tuh - \tz)(\lambdah(\m)(\m + \tbh - \tuh))\right) \right]\nonumber\\
    &= \sigmaus +  \E\left[ \sup_{\m} \left({(\lambdah(\m)(\m + \tbh - \tuh))}^2 + 2(\tuh - \tz)\lambdah(\m)(\m + \tbh - \tuh)\right) \right].\label{eq:sup_mse_cov}
\end{align}
Our strategy is then to calculate the supremum inside of the expectation, and then give the bound in terms of the remaining parameters. 

\begin{restatable}{lemma}{MSEBoundLemma}\label{lemma:MSEBoundCov_supremum}
  The optimizer $\ms$ that achieves the supremum in Equation~\eqref{eq:sup_mse_cov} is given by 
  \begin{equation}\label{eq:optimal_delta_mse_bound}
    \ms = \begin{cases}
      (\tuh - \tbh) + \sqrt{\sigmauhs + \sigmabhs - 2 \sigmacovh}, &\ \text{if } (\sigmauhs - \sigmacovh) (\tuh - \tz) \geq 0\\
      (\tuh - \tbh) - \sqrt{\sigmauhs + \sigmabhs - 2 \sigmacovh}, &\ \text{if } (\sigmauhs - \sigmacovh) (\tuh - \tz) < 0.
    \end{cases}
  \end{equation}
  and the associated optimal value is
  \begin{align}
  {(\lambdah(\ms)(\ms + \tbh - \tuh))}^2 + 2(\tuh - \tz)\lambdah(\ms)(\ms + \tbh - \tuh) &= \frac{\abs{\sigmauhs - \sigmacovh}\abs{\tuh - \tz}}{\sqrt{\sigmauhs + \sigmabhs - 2 \sigmacovh}} + \frac{(\sigmauhs - \sigmacovh)}{4 (\sigmauhs + \sigmabhs - 2 \sigmacovh)}
  \end{align}
\end{restatable}
We defer proof of~\cref{lemma:MSEBoundCov_supremum} to~\cref{sec:proof_of_lemma}. Using~\cref{lemma:MSEBoundCov_supremum}, we can write Equation~\eqref{eq:sup_mse_cov} as 
  \begin{equation}
    \E[{(\thetahlh - \tz)}^2] \leq \sigmaus + \E\left[\frac{\abs{\sigmauhs - \sigmacovh} \abs{\tuh - \tz}}{\sqrt{\sigmauhs + \sigmabhs - 2 \sigmacovh}} + \frac{{(\sigmauhs - \sigmacovh)}^2}{4 {(\sigmauhs + \sigmabhs - 2 \sigmacovh)}}\right]
  \end{equation}
  which can be refined by a change of notation, writing $S \coloneqq \frac{\abs{\sigmauhs - \sigmacovh}}{\sqrt{\sigmauhs + \sigmabhs - 2 \sigmacovh}}$, and observing that this yields
  \begin{align*}
    \E[{(\thetahlh - \tz)}^2] &\leq \sigmaus + \E\left[S \abs{\tuh - \tz} + \frac{S^2}{4}\right] \\
                              &=  \sigmaus + \E\left[S \abs{\tuh - \tz}\right] + \frac{1}{4}\E[S^2] \\
                              &\leq  \sigmaus + \sqrt{\E[S^2]\E\left[{(\tuh - \tz)}^2\right]} + \frac{1}{4}\E[S^2] & \text{(Cauchy-Schwarz)} \\
                              &= \sigmaus + \sqrt{\E[S^2]} \sigmau + \frac{1}{4} \E[S^2]\\
                              &= {\left(\sigmau + \frac{1}{2} \sqrt{\E[S^2]}\right)}^2
  \end{align*}
  where $\E[S^2] = \E[{(\sigmauhs - \sigmacovh)}^2 / (\sigmauhs + \sigmabhs - 2 \sigmacovh)]$. 
\end{proof}

\MSEBoundCov*
\begin{proof}
  To arrive at the desired form, we observe that $S^2$ (from~\cref{thmthm:mse_bound_cov_unknown_var}) becomes deterministic with known variances and covariances, i.e., $\sqrt{\E[S^2]} = \abs{\sigmaus - \sigmacov} / \sqrt{(\sigmaus + \sigmabs - 2 \sigmacov)}$. This allows us to write the bound from~\cref{thmthm:mse_bound_cov_unknown_var} as
  \begin{equation}\label{eq:mse_bound_cov_before_reparameterize}
    \E[{(\thetahlh - \tz)}^2]  \leq {\left(\sigmau + \frac{1}{2} \frac{\abs{\sigmaus - \sigmacov}}{\sqrt{(\sigmaus + \sigmabs - 2 \sigmacov)}}\right)}^2 
  \end{equation}
Define the correlation as $\rho \coloneqq \sigmacov / \sqrt{\sigmaus \sigmabs}$, and define the ratio of the standard deviations as $c = \sqrt{\sigmabs/\sigmaus}$.  If $\sigmabs = 0$, we define $\rho = 0$.  Note that by Assumption~\ref{asmp:nonzero-variance-diff}, $\sigmaus > 0$, so $c$ is always well-defined. 
\begin{align*}
  \sigmacov &= \sigmaus \rho c &\implies && \sigmaus - \sigmacov &= \sigmaus (1 - \rho c) \\
  \sigmabs &= \sigmaus c^2 & \implies && \sigmaus + \sigmabs - 2 \sigmacov &= \sigmaus(1 + c^2 - 2 \rho c)
\end{align*}
and observe that this allows us to rewrite Equation~\eqref{eq:mse_bound_cov_before_reparameterize} as 
\begin{align}
  \E[{(\thetahlh - \tz)}^2] &\leq \sigmaus + \frac{\sigmaus \abs{1 - \rho c}\sigmau}{\sqrt{\sigmaus(1 - 2 \rho c + c^2)}} + \frac{{(\sigmaus(1 - \rho c))}^2}{4\sigmaus(1 - 2 \rho c + c^2) } \nonumber \\
  &= \sigmaus + \sigmaus \frac{\abs{1 - \rho c}}{\sqrt{1 - 2 \rho c + c^2}} + \sigmaus \frac{{(1 - \rho c)}^2}{4(1 - 2 \rho c + c^2) } \nonumber \\ 
  &= \sigmaus {\left(1 + \frac{1}{2} \frac{\abs{1 - \rho c}}{\sqrt{1 - 2 \rho c + c^2}}\right)}^2 \label{eq:mse_bound_cov_after_reparameterize}
\end{align}
which gives the desired result.
\end{proof}

\ConvergeUnboundedBias*
\begin{proof}
  First, we prove that the estimator $\thetahlh$ converges almost surely to the unbiased estimator $\tuh$ as $k \rightarrow \infty$
  \begin{equation}
   \thetahlhk \cas \tuh,
  \end{equation}
  which implies almost-sure convergence of the squared error
  \begin{equation}
    {\left(\thetahlhk - \tz\right)}^2 \cas {\left(\tuh - \tz\right)}^2.
  \end{equation}
  Once we have established this, we use the dominated convergence theorem to give the desired result.

  \textbf{Almost-Sure Convergence}: Let $\Omega$ denote the sample space, such that $\tuh(\omega), \tbh'(\omega)$ are the realized values of $\tuh, \tbh'$ for the event $\omega \in \Omega$.  For any realization of $\tuh(\omega), \tbh'(\omega)$, we will show that the estimator $\thetahlhk(\omega)$ converges to $\tuh(\omega)$ as $k \rightarrow \infty$.  In the sequel, we will drop the $\omega$ for simplicity of presentation, and consider any realization of $\tuh, \tbh'$, and write $\tbhk$ as the value $\tbh' + \mu_k$
  \begin{align*}
    \thetahlh &= \frac{(\tbh' + \mu_k) \cdot (\sigmaus - \sigmacov)}{{(\tuh - \tbh' - \mu_k)}^2 + \sigmaus + \sigmabs - 2 \sigmacov} + \frac{\tuh \cdot ({(\tuh - \tbh' - \mu_k)}^2 + \sigmabs - \sigmacov)}{{(\tuh - \tbh' - \mu_k)}^2 + \sigmaus + \sigmabs - 2 \sigmacov} ,
  \end{align*}
  where we can see that as $\mu_k \rightarrow \infty$, the first term goes to zero, given $\mu_k^2$ in the denominator and $\mu_k$ in the numerator.  Meanwhile, the second term converges to $\tuh$, as $\lambdah$ converges to 0, so that $(1 - \lambdah)\tuh$ converges to $\tuh$
  \begin{equation}
    \lim_{k \rightarrow \infty} \frac{{(\tuh - \tbh' - \mu_k)}^2 + \sigmabs - \sigmacov}{{(\tuh - \tbh' - \mu_k)}^2 + \sigmaus + \sigmabs - 2 \sigmacov} = 1
  \end{equation}

\textbf{Dominated Convergence Theorem:} 
Here we apply the dominated convergence theorem, by defining a random variable $Z$ such that ${(\thetahlhk - \tz)}^2 \leq Z$ almost surely, and where $\E[Z] < \infty$. First, we can observe that ${(\thetahlhk - \tz)}^2$ is a function of $\mu_k$, and is upper bounded by the supremum over all possible values of $\mu_k \in \R$.  In the following, we represent this by replacing $\mu_k$ with the value $\delta$. 
\begin{align*}
  {(\thetahlhk - \tz)}^2 &= {(\tuh - \tz)}^2 + 2 \lambdah (\tbh' + \mu_k - \tuh)(\tuh - \tz) + \lambdah^2 {(\tbh' + \mu_k - \tuh)}^2 \\
  &\leq \sup_{\delta \in \R} {(\tuh - \tz)}^2 + 2 \lambdah (\tbh' + \delta - \tuh)(\tuh - \tz) + \lambdah^2 {(\tbh' + \delta - \tuh)}^2
\end{align*}
The first term does not depend on $\delta$, and by~\Cref{lemma:MSEBoundCov_supremum}, the remainder is maximized by taking $\delta^*$ as defined in Equation~\eqref{eq:optimal_delta_mse_bound}.  This yields that 
\begin{align*}
  {(\thetahlhk - \tz)}^2 &\leq {(\tuh - \tz)}^2 + \frac{\abs{\sigmaus - \sigmacov}\abs{\tuh - \tz}}{\sqrt{\sigmaus + \sigmabs - 2 \sigmacov}} + \frac{(\sigmaus - \sigmacov)}{4 (\sigmaus + \sigmabs - 2 \sigmacov)}
\end{align*}
the right-hand side is a random variable that does not depend on $\mu_k$, and it has a finite expectation, as shown in~\Cref{thmcorr:mse_bound_cov}.  This completes the proof.
\end{proof}

\subsection{Proof of~\texorpdfstring{\cref{lemma:MSEBoundCov_supremum}}{Lemma}}%
\label{sec:proof_of_lemma}

\newcommand{\bterm}{{\color{RawSienna}\Delta}}
\newcommand{\bterms}{{\color{RawSienna}\Delta^2}}
\newcommand{\stermu}{S_u}
\newcommand{\stermb}{S_b}

\MSEBoundLemma*
\begin{proof}
Because the expression in~\cref{eq:sup_mse_cov} is a differentiable function of $\m$, we enumerate all of the stationary points, and demonstrate that the chosen value achieves the maximum objective value over all such stationary points. We will use the simplifying expressions
\begin{align*}
  \bterm &\coloneqq \m + \tbh - \tuh & \stermu &\coloneqq \sigmauhs - \sigmacovh & \stermb &\coloneqq \sigmabhs - \sigmacovh 
\end{align*}
which allows us to rewrite $\lambdah(\m)$ as follows, recalling that $\bterm$ is a function of $\m$
\begin{align*}
  \lambdah(\m) &= \frac{\sigmauhs - \sigmacovh}{{(\m + \tbh - \tuh)}^2 + \sigmauhs + \sigmabhs - 2 \sigmacovh} = \frac{\stermu}{\bterms + \stermu + \stermb},
\end{align*}
and we will expand the supremum in~\cref{eq:sup_mse_cov} to write it as a function of $\bterm$ for notational convenience, noting that $\bterm$ is a 1-to-1 function of $\m$ for a given $\tuh, \tbh$
\begin{align}
  \sup_{\m} 2 \lambdah(\m) (\m + \tbh - \tuh)(\tuh - \tz) + \lambdah{(\m)}^2 {(\m + \tbh - \tuh)}^2 &= \sup_{\bterm} 2 \lambdah(\m) \bterm (\tuh - \tz) + \lambdah{(\m)}^2 \bterms \\
  &= \sup_{\bterm} 2 \frac{\stermu \bterm (\tuh - \tz)}{\bterms + \stermu + \stermb} + 
  \frac{\stermu^2\bterms}{{(\bterms + \stermu + \stermb)}^2} \\
  &= \sup_{\bterm} \frac{2 (\bterms + \stermu + \stermb)\stermu \bterm (\tuh - \tz) +\stermu^2\bterms}{{(\bterms + \stermu + \stermb)}^2} \label{eq:sup_mse_cov_shorthand}
\end{align}
Before enumerating the stationary points in this expression, we demonstrate that the maximum is attained by a finite value of $\bterm$.  In particular, as $\abs{\bterm} \rightarrow \infty$, the entire term goes to zero, as the denominator is $O(\bterm^4)$ while the numerator is $O(\bterm^3)$. This justifies the use of the first-order condition to identify local maxima and minima, observing that $(d/d\m) \bterm= 1$ and $(d / d\m) S = 0$, and that taking the supremum with respect to $\bterm$ is equivalent to doing so with respect to $\m$, since $\bterm$ is simply $\m$ plus a fixed offset.

First, we compute the derivative of the numerator and denominator of the expression in Equation~\eqref{eq:sup_mse_cov_shorthand}
\begin{align}
 \frac{d}{d\bterm} \left[2 (\bterms + \stermu + \stermb)\stermu \bterm (\tuh - \tz) +\stermu^2 \bterms\right] &= \frac{d}{d\bterm} \left[2 (\stermu \bterm^3 + \stermu^2 \bterm + \stermu \bterm \stermb)(\tuh - \tz) +\stermu^2\bterms\right]\\
 &= (6 \stermu \bterms + 2 \stermu^2 + 2 \stermu \stermb)(\tuh - \tz) + 2 \stermu^2\bterm\\
  \frac{d}{d\bterm} {(\bterms + \stermu + \stermb)}^2 &= 4 {(\bterms + \stermu + \stermb)} \bterm
\end{align}
and then we compute the derivative of the expression in Equation~\eqref{eq:sup_mse_cov_shorthand}
\begin{align}
  \frac{d}{d\bterm} \frac{2 (\bterms + \stermu + \stermb)\stermu \bterm (\tuh - \tz) +\stermu^2\bterms}{{(\bterms + \stermu + \stermb)}^2} =& \frac{{(\bterms + \stermu + \stermb)}^2 \cdot [(6 \stermu \bterms + 2\stermu^2 + 2\stermu \stermb)(\tuh - \tz) + 2 \stermu^2\bterm ]}{{(\bterms + \stermu + \stermb)}^4} \nonumber \\
  &\quad \quad - \frac{[ 2 (\bterms + \stermu + \stermb)\stermu \bterm (\tuh - \tz) +\stermu^2\bterms ]4 {(\bterms + \stermu + \stermb)}  \bterm}{{(\bterms + \stermu + \stermb)}^4},
\end{align}
and simplify this expression to find values of $\bterm$ for which this is equal to zero. Note that by Assumption~\ref{asmp:nonzero-variance-diff}, $\stermu + \stermb > 0$, so that the denominator term is non-zero, and we can safely remove it (and a similar term in the numerator) by multiplying the entire expression by ${(\bterms +\stermu + \stermb)}^3$.
\begin{align}
  0 &= {(\bterms + \stermu + \stermb)} \cdot [(6 \stermu \bterms + 2\stermu^2 + 2\stermu \stermb)(\tuh - \tz) + 2 \stermu^2\bterm ] \nonumber \\
       &- [ 2 (\bterms + \stermu + \stermb)\stermu \bterm (\tuh - \tz) + \stermu^2\bterms ]4 \bterm \nonumber \\ 
  \iff & {(\bterms + \stermu + \stermb)} (3 \stermu \bterms + \stermu^2 + \stermu \stermb)(\tuh - \tz) + {(\bterms + \stermu + \stermb)} \stermu^2\bterm \label{eq:divide_by_two} \\
       &-  4 (\bterms + \stermu + \stermb)\stermu \bterms (\tuh - \tz) - 2\stermu^2{\bterm}^3  = 0  \nonumber \\
  \iff & {(\bterms + \stermu + \stermb)} (\stermu + \stermb - \bterms) \stermu (\tuh - \tz) + {(\bterms + \stermu + \stermb)} \stermu^2\bterm - 2\stermu^2{\bterm}^3  = 0   \label{eq:collect_terms}\\
  \iff & (\stermu + \stermb - \bterms) \stermu (\tuh - \tz) + \stermu^2\bterm = \frac{2\stermu^2{\bterm}^3}{\bterms + \stermu + \stermb} \nonumber \\
  \iff & (\stermu + \stermb - \bterms) \stermu (\tuh - \tz) = \frac{2\stermu^2{\bterm}^3 - (\stermu^2 \bterm)(\bterms + \stermu + \stermb)}{\bterms + \stermu + \stermb}\nonumber \\
  \iff & (\stermu + \stermb - \bterms) \stermu (\tuh - \tz) = \frac{\stermu^2{\bterm}^3 - \stermu^3 \bterm - \stermu^2 \bterm \stermb}{\bterms + \stermu + \stermb}\nonumber \\
  \iff & (\stermu + \stermb - \bterms) \stermu (\tuh - \tz) = - \stermu^2 \bterm \frac{-\bterms + \stermu + \stermb}{\bterms + \stermu + \stermb}   \label{eq:bterm2solution} \\
  \iff & \stermu (\tuh - \tz) = \frac{- \stermu^2 \bterm}{\bterms + \stermu + \stermb} & \text{(If $\bterms \neq \stermu + \stermb$)}  \nonumber \\
  \iff & (\bterms + \stermu + \stermb) \stermu (\tuh - \tz) + \stermu^2 \bterm = 0 \label{eq:final_stationary_point}
\end{align}
In Equation~\eqref{eq:divide_by_two} we divide by two and distribute terms, and in Equation~\eqref{eq:collect_terms} we collect terms involving $\tuh - \tz$ before simplifying further. Equation~\eqref{eq:bterm2solution} reveals that $\bterms = \stermu + \stermb$ is a stationary point, and Equation~\eqref{eq:final_stationary_point} implicitly defines another set of stationary points. Any stationary point is a solution to one of the following.
\begin{align}
  (\bterms + \stermu + \stermb) \stermu (\tuh - \tz) + \stermu^2 \bterm &= 0 \label{eq:first_order_condition}\\
  \bterms &= \stermu + \stermb \label{eq:first_order_condition_main} 
\end{align}
Next, we will show that for finding a global maximum, it suffices to consider Equation~\eqref{eq:first_order_condition_main}.  In particular, we demonstrate that when we plug these conditions into the original expression from Equation~\eqref{eq:sup_mse_cov_shorthand},
\begin{equation*}
  \frac{2 (\bterms + \stermu + \stermb)\stermu \bterm (\tuh - \tz) +\stermu^2\bterms}{{(\bterms + \stermu + \stermb)}^2} = \begin{cases}
    \leq 0 &\ \text{if } \bterm \text{ satisfies Eq.~\eqref{eq:first_order_condition}}\\
    \frac{\pm  4 \stermu (\sqrt{\stermu + \stermb})(\tuh - \tz) + \stermu^2}{4 {(\stermu + \stermb)}}, &\ \text{if } \bterm \text{ satisfies Eq.~\eqref{eq:first_order_condition_main}} \\
  \end{cases}
\end{equation*}
where the stationary points satisfying Equation~\eqref{eq:first_order_condition_main} always include a non-negative solution, while those satisfying Equation~\eqref{eq:first_order_condition} are always non-positive.  We prove both of these points below.

\textit{Solutions satisfying Equation~\eqref{eq:first_order_condition}}: The solutions implied by Equation~\eqref{eq:first_order_condition} satisfy
\begin{equation}
(\bterms + \stermu + \stermb) \stermu (\tuh - \tz) + \stermu^2 \bterm = 0
\end{equation}
which includes the solution $\bterm = 0$ when $\tuh = \tz$, in which case the value of the optimization objective is zero.  When $\tuh \neq \tz$, we have it that $\bterm = 0$ is no longer a solution.  We can, however, use Equation~\eqref{eq:first_order_condition} to observe that $(\bterms + \stermu + \stermb) \stermu (\tuh - \tz) = - \stermu^2 \bterm$, which implies that the value of the objective is given by
\begin{equation}
  \frac{2 (\bterms + \stermu + \stermb)\stermu \bterm (\tuh - \tz) +\stermu^2\bterms}{{(\bterms + \stermu + \stermb)}^2} = \frac{2 \bterm (-\stermu^2 \bterm) + \stermu^2 \bterms}{{(\bterms + \stermu + \stermb)}^2} 
  = \frac{- \stermu^2 \bterms}{{(\bterms + \stermu + \stermb)}^2} \leq 0
\end{equation}
which is equal to zero if and only if $\stermu = 0$, in which case $\lambdah$ is zero.  Because this expression is non-positive, it is not a global maximum, since there exist solutions that are positive (see below).

\textit{Solutions satisfying Equation~\eqref{eq:first_order_condition_main}:} When $\bterms = \stermu + \stermb$, we see that Equation~\eqref{eq:sup_mse_cov_shorthand} is equal to
\begin{align*}
\frac{\pm  4(\stermu + \stermb) \stermu (\sqrt{\stermu + \stermb})(\tuh - \tz) + \stermu^2 {(\stermu + \stermb)}}{4 {(\stermu + \stermb)}^2} &= \frac{\pm  4 \stermu (\sqrt{\stermu + \stermb})(\tuh - \tz) + \stermu^2}{4 {(\stermu + \stermb)}}
\end{align*}
which is maximized by taking the absolute value of the first term, choosing $\bterm = \sqrt{\stermu + \stermb}$ if $\stermu (\tuh - \tz) > 0$ and $\bterm = - \sqrt{\stermu + \stermb}$ if $\stermu (\tuh - \tz) < 0$.  If $\tuh = \tz$ the choice of sign is irrelevant, and can be chosen arbitrarily.  This will always yield a non-negative solution, given by 
\begin{equation}
\frac{4 \abs{\stermu} (\sqrt{\stermu + \stermb})\abs{\tuh - \tz} + \stermu^2}{4 {(\stermu + \stermb)}}
\end{equation}
which yields the claimed result, that the supremum is given by 
\begin{equation}\label{eq:mse_bound_cov_proof-replace-with-estimates}
\frac{\abs{\sigmauhs - \sigmacovh} \abs{\tuh - \tz}}{\sqrt{\sigmauhs + \sigmabhs - 2 \sigmacovh}} + \frac{{(\sigmauhs - \sigmacovh)}^2}{4 {(\sigmauhs + \sigmabhs - 2 \sigmacovh)}},
\end{equation}
which completes the proof.
\end{proof}

\section{Comparison to \texorpdfstring{\citet{Cheng2021-sn}}{Cheng et al. 2021}}%
\label{sec:comparison_to_cheng_et_al_2021}

\citet{Cheng2021-sn} take a similar approach to taking an adaptive linear combination of observational and experimental estimators.  We focus on a broader class of estimation problems that involve estimating a real-valued parameter, including the use of surrogate outcomes (as described in~\cref{sec:motivation_and_setup}), while they focus on CATE estimation with kernel regression, in the context of combining experimental and trial estimators.  However, while they approach a specific problem setting, their approach can be seen as a variant of the reference estimator on a conceptual level, and is nearly equivalent in the setting of ATE estimation via combination of experimental and observational data, with the exception of an additional hyperparameter that they introduce, scaling the estimated bias by a factor of $n^{-\beta}$.

Here, we give the approach of that work, in the context of ATE estimation, combining observational and experimental data.  We focus on the ATE in the trial population, which they denote as $\tau_0(\bv) = \E[Y_1 - Y_0 \mid \bV = \bv, Z = 0]$, where $Z = 0$ denotes the trial population and $Z = 1$ denotes the observational population, and $\bV$ denotes a set of covariates.  We use the notation $\tau$ instead of $\tau(\bv)$ because there is no conditioning set for the ATE, and we will take $\tau_0$ to be the target of inference. In this case, Equation 8 of \citet{Cheng2021-sn} becomes
\begin{equation}
  \tauh = \tauh^r + \eta (\tauh^o - \tauh_r)
\end{equation}
and the goal is to estimate the optimal value of $\eta$ from data.  Here, \citet{Cheng2021-sn} consider standard doubly-robust (DR) pseudo-outcomes for the treatment effect (see Equation 10 of \citet{Cheng2021-sn}), which are denoted as $\hat{\Psi}^r$ for the outcomes based on the trial data, and $\hat{\Psi}^{o}$ for outcomes based on the observational data.  Here we use $\hat{\Psi}$ to denote the pseudo-outcome when we use a plug-in estimate of nuisance parameters, and $\bar{\Psi}$ to denote the pseudo-outcome when we plug in the true values of the nuisance parameters.

\citet{Cheng2021-sn} consider locally constant kernel regression for estimation of CATE, which is unnecessary for ATE\@.  As a result, under the simplifying assumption that the distribution of $X$ is the same across the trial and observational study,\footnote{This avoids the need for the weights $\omega(X)$ in their equations, but this is only for the sake of notational simplicity here.}
\begin{equation}
  \tauh^r = n^{-1} \sum^{n}_{i=1} \frac{\1{Z_i = 0}}{\hat{P}(Z = 0)} \hat{\Psi}^r_j
\end{equation}
where $\hat{P}(Z = 0) = n^{-1} \sum^{n}_{i=1} \1{Z_i = 0}$ is the empirical estimate of the proportion of the total dataset in the trial, with an analogous estimator for the observational data (see Equations 11--12 of \citet{Cheng2021-sn}).  Here, it is assumed throughout that this probability is bounded away from zero, so this work excludes the case where the number of observational samples is of a different asymptotic order than the number of trial samples.

With this in mind, both estimators can be written with an asymptotically linear representation as follows, where $\bar{\tau}$ is used as the asymptotic limit of an estimator $\hat{\tau}$
\begin{equation}
  \srn(\tauh^r - \bar{\tau}^r) = n^{-1/2} \sum^{n}_{i=1} \underbrace{\frac{\1{Z_i = 0}}{P(Z = 0)} (\bar{\Psi}^r_j - \bar{\tau}^r)}_{\xi_i^r} + o_p(1),
\end{equation}
and likewise for $\tauh^o$, replacing the superscript $r$ with $o$, and $Z = 0$ with $Z = 1$.  Here, the term $\xi_i^r$ is the influence function.  Per Lemma 5, the MSE for the target parameter $\bar{\tau}^r$ can be written as 
\begin{equation}
  \E\left[{\left(\tauh - \bar{\tau}^r\right)}^2\right] = n^{-1} \E\left[{\left(\xi_i^r - \eta(\xi^r_i - \xi_i^o)\right)}^2\right] + \eta^2{\left(\bar{\tau}^o - \bar{\tau}^r\right)}^2 + o(n^{-1/2}),
\end{equation}
which suggests the following scaled empirical criterion (multiplying by $n^2$), for estimating $\eta$, 
\begin{equation}
  \hat{Q}(\eta) = \sum^{n}_{i=1} {\left[\hat{\xi}^r_i - \eta (\hat{\xi}^r_i - \hat{\xi}^o_i)\right]}^2 + \eta^2 n^{(2 - \beta)} {(\tauh^o - \tauh^r)}^2
\end{equation}
where $\hat{\eta} = \argmin_{\eta \in \R} \hat{Q}(\eta)$, and where the empirical influence functions are estimated via plug-in, where e.g., $\hat{\xi}^r_i = \frac{\1{Z_i = 0}}{P(Z = 0)} (\hat{\Psi}^r_i - \tauh^r)$.  Here, the solution is given by the following
\begin{align*}
\eta & = \frac{\E_n\left[{(\hat{\xi}^r)}^2\right] - \E_n[\hat{\xi}^r \hat{\xi}^o] }{\E_n[{(\hat{\xi}^r - \hat{\xi}^o)}^2] + n^{(1 - \beta)} {(\tauh^o - \tauh^r)}^2}
\end{align*}
where $\E_n[\cdot]$ is the empirical average.  

\textbf{Connection to the reference estimator}: Based on the asymptotically linear form of these estimators, we can write our estimators of the variance and covariance of each estimator as
\begin{align*}
  \sigmauhs &\coloneqq \frac{1}{n} \E_n\left[{(\hat{\xi}^r)}^2\right]& \sigmabhs &\coloneqq \frac{1}{n} \E_n\left[{(\hat{\xi}^o)}^2\right] & \sigmacovh &\coloneqq \frac{1}{n} \E_n\left[{\hat{\xi}^r \hat{\xi}^o}\right] 
\end{align*}
and observe that the proposed estimator of \citet{Cheng2021-sn}, adapted to the setting of ATE estimation, would be equal to a similar affine combination as the reference estimator, with weights
\begin{align*}
  \eta & = \frac{\sigmauhs - \sigmacovh}{n^{-\beta} {(\tauh^r - \tauh^o)}^2 + \sigmauhs + \sigmabhs - 2 \sigmacovh}
\end{align*}
which differs from the choice of $\lambdah$ used in the reference estimator due to the $n^{-\beta}$ term in the denominator.

\section{Additional Experimental Details}%
\label{sec:additional_baseline_details}

\subsection{Prior Approaches}%
\label{sec:baselines}

\textbf{Hypothesis Testing} We give the methodology of \citet{Yang2020-na} in full generality here, before discussing how it applies to our setting.  They suppose that there exists some score function $S_{\psi}(V)$, where $\psi$ is the parameter of interest and $V$ denotes observed data, where $\delta = 0$ corresponds to the observational data and $\delta = 1$ correspond to the randomized data.  Let there be $m$ samples in the randomized data, denoted $\cA$, and $n$ samples in the observational data, denoted $\cB$. Solving for $\psi$ requires solving the moment condition $\E[S_{\psi}(V)] = 0$. The simplest example of such a score function approach is estimation of the mean of $V$, where $S_{\psi}(V) = V - \psi$, and solving for $\psi$ is simply given by observing that $\E_n[V] = \psi$.  

The core approach is to construct a statistic for testing whether or not $S_{\psi}(V)$ has the same average value in unbiased randomized trial data, versus in the potentially biased observational (or \enquote{real world}) data. The first step in constructing their test statistic is to estimate the parameter from the randomized data, denoting this estimate as $\hat{\psi}_{rt} = \tuh$, and then evaluate the score on the real-world data, giving 
\begin{equation}
n^{-1/2} \sum_{i \in \cB} \hat{S}_{rw, \hat{\psi}_{rt}}(V_i) = \sqrt{n}
({\tuh} - {\tbh}) 
\end{equation}
which is then used to construct the test statistic (see Equation 7 of \citet{Yang2020-na}) as 
\begin{equation}
  T_n = \frac{{({\tuh} - {\tbh})}^2}{n \sigmah}
\end{equation}
where $\sigmah$ is a consistent estimate of the asymptotic variance of $\sqrt{n} ({\tuh} - {\tbh})$.  In our setting, this asymptotic variance is given by an estimate of $\sigmaus + \sigmabh - 2 \sigmacov$.  This test statistic converges in distribution to a chi-square random variable under the null hypothesis that no bias exists. With this in mind, their estimator can be represented as follows
\begin{equation}
  \sum_{i \in \cA \cup \cB} \{ \delta_i \hat{S}_{\psi}(V_i) + \1{T_n <
      c_{\gamma}}(1 - \delta_i) \hat{S}_{\psi}(V_i) \} = 0
\end{equation}
where if $T_n \geq c_{\gamma}$, this reduces to using ${\tuh}$, and otherwise this reduces to pooling the data and taking a global average of $\tuh, \tbh$, weighted by sample size. In our experiments, $\tuh, \tbh$ have the same sample size, so this is just a simple average of $\tuh, \tbh$.

The asymptotic bias and MSE of this estimator (for a given threshold $c_{\gamma}$) depends on the underlying bias of the observational estimator. \citet{Yang2020-na} derive an analytical formula for these terms (see Corollary 1 of \citet{Yang2020-na}), and note that $c_{\gamma}$ can be tuned by first estimating the bias, plugging this into these formula, and choosing a threshold that minimizes the resulting MSE\@.  More practically, they suggest estimating the bias using a plug-in estimate, specifying a grid of values for the significance level $\gamma$, and simulating from the limiting mixture distribution to identify the significance level that minimizes the MSE\@. 

In our experiments, we implement this data-driven selection of the hyperparameter as follows: For each setting of parameters in Table~\ref{tab:simulation_parameter}, we simulate performance of this approach for a grid of significance levels $\gamma \in \{0, 0.05, 0.10, \ldots, 0.95, 1\}$.  For each value of the bias $\mu \in [0, 1.5]$, we record the threshold which yields minimum MSE\@.  Then, we re-run the simulations, where we first estimate the bias as $\abs{\tuh - \tbh}$, and then look up the optimal cutoff based on our prior simulations.

\textbf{Anchored Thresholding} Given an unbiased estimate $\tuh$ and a biased estimate $\tbh$, \citet{Chen2021-eo} always combine the estimators, but they first apply a bias correction to $\tbh$. In particular, they apply soft-thresholding to estimate the bias, where 
\begin{equation}
  \hat{\mu} = \begin{cases}
    \text{sign}({\tbh} - {\tuh})\left(\abs{{\tbh} - {\tuh}} - \lambda
    \sqrt{\hat{\var}({\tbh} - {\tuh})}\right), &\ \text{if } \abs{{\tbh} - {\tuh}}
    \geq \lambda \cdot \sqrt{\hat{\var}( {\tbh} - {\tuh} )} \\
    0, &\ \text{otherwise.}
  \end{cases}
\end{equation}
This estimated bias is used to \enquote{correct} $\tbh$ by replacing it with $\tbh - \hat{\mu}$.  At this stage, the estimators are combined on the assumption that both are unbiased, with the combination (in our setting) given by
\begin{equation}
  \hat{w} (\tbh - \hat{\mu}) + (1 - \hat{w}) \tuh
\end{equation}
where 
\begin{equation}
\hat{w} = \frac{\sigmauhs - \sigmacovh}{\sigmauhs + \sigmabhs - 2 \sigmacovh}.
\end{equation}
In this setting $\lambda$ is a hyperparameter, which should be of asymptotic order $\lambda \asymp \sqrt{\log n}$.  In their experiments, they choose a constant $\lambda_1 = 0.5$ and then set $\lambda = \lambda_1 \cdot \sqrt{\log n}$, so we do the same.

\section{Additional Experimental Results}%
\label{sec:additional_experimental_results}

\subsection{Comparison to hypothesis testing with a fixed threshold}%
\label{sec:comparison_to_hypothesis_testing}

In~\cref{fig:ht_app}, we compare against the testing-based approach of~\citet{Yang2020-na} for different \textbf{fixed} significance thresholds $\gamma$, in contrast to the \enquote{data-driven} thresholds used in the main text, and described in~\cref{sec:baselines}.

\begin{figure}[t]
  \centering
  \begin{subfigure}[t]{0.3\textwidth}
    \centering
    \includegraphics[width=\linewidth, height=\linewidth, keepaspectratio]{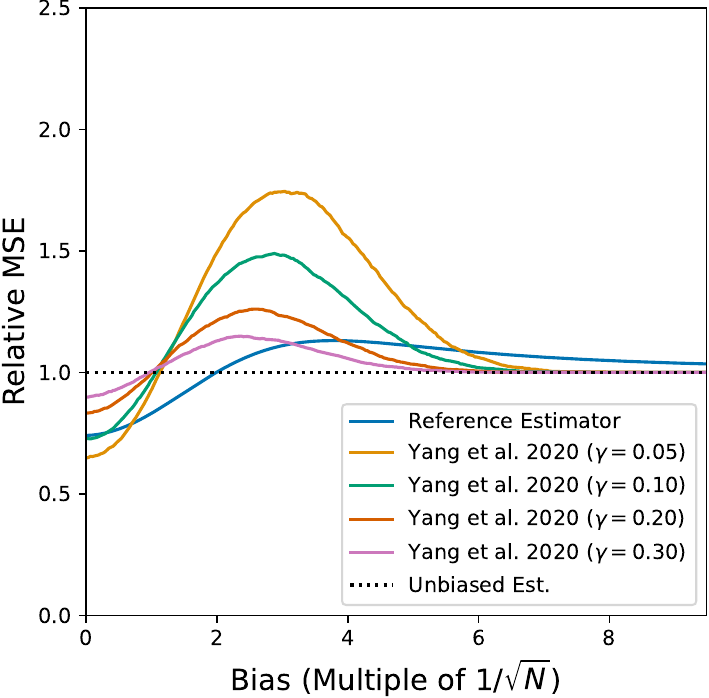}
    \caption{}%
    \label{fig:curves_ht_app}
  \end{subfigure}
  \begin{subfigure}[t]{0.3\textwidth}
    \centering
    \includegraphics[width=\linewidth, height=\linewidth, keepaspectratio]{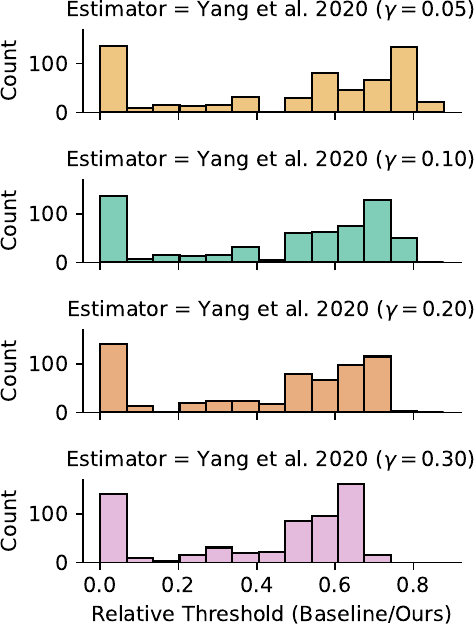}
    \caption{}%
    \label{fig:histogram_ht_app}
  \end{subfigure}
  \begin{subfigure}[t]{0.3\textwidth}
    \centering
    \includegraphics[width=\linewidth, height=\linewidth, keepaspectratio]{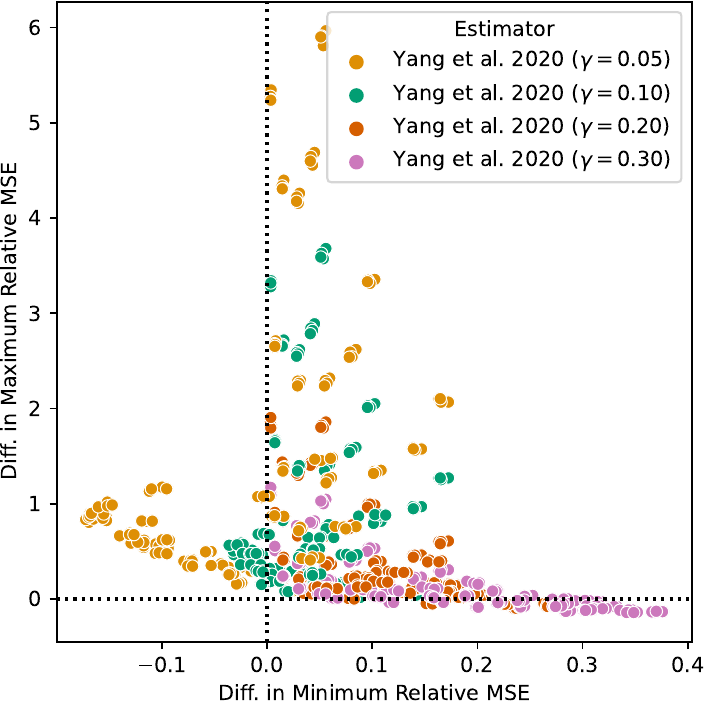}
    \caption{}%
    \label{fig:best_worst_2d_ht_app}
  \end{subfigure}
  \caption{Comparison against hypothesis testing approach of~\citet{Yang2020-na}.
      (\subref{fig:curves_ht_app}) See~\cref{fig:single_example_curve_baseline} for description.
      (\subref{fig:histogram_ht_app}) Histogram of the \textit{ratio} of bias thresholds (baseline / ours), where a number $<1$ indicates that the given approach has a lower threshold than the reference estimator, and where \enquote{ours} refers to the reference estimator.
      (\subref{fig:best_worst_2d_ht_app}) See~\cref{fig:best_worst_curve_diff} for description.
  }%
  \label{fig:ht_app}
\end{figure}

\subsection{Comparison of different hyperparameter settings}%
\label{sec:comparison_of_different_hyperparameter_settings}

In~\cref{fig:anchor_app} we compare against the anchored thresholding approach of~\citet{Chen2021-eo} for different choices of the hyperparameter $\lambda$.  Recall from~\cref{sec:prior_approaches_for_combining_estimators} that $\lambda$ controls the extent to which the estimated bias is regularized towards zero: Hence, large values of $\lambda$ are less conservative, and small values of $\lambda$ are more conservative.  The large variation in outcomes (best vs worst-case performance) speaks to the sensitivity of the method towards the choice of hyperparameter.  We give a similar comparison to~\citet{Cheng2021-sn} in~\cref{fig:l2_app}, whose approach is equivalent to the reference estimator for $\beta = 0$, and otherwise tends to be more optimistic, with a higher maximum and lower minimum relative MSE, along with a lower bias threshold.

\begin{figure}[t]
  \centering
  \begin{subfigure}[t]{0.3\textwidth}
    \centering
    \includegraphics[width=\linewidth, height=\linewidth, keepaspectratio]{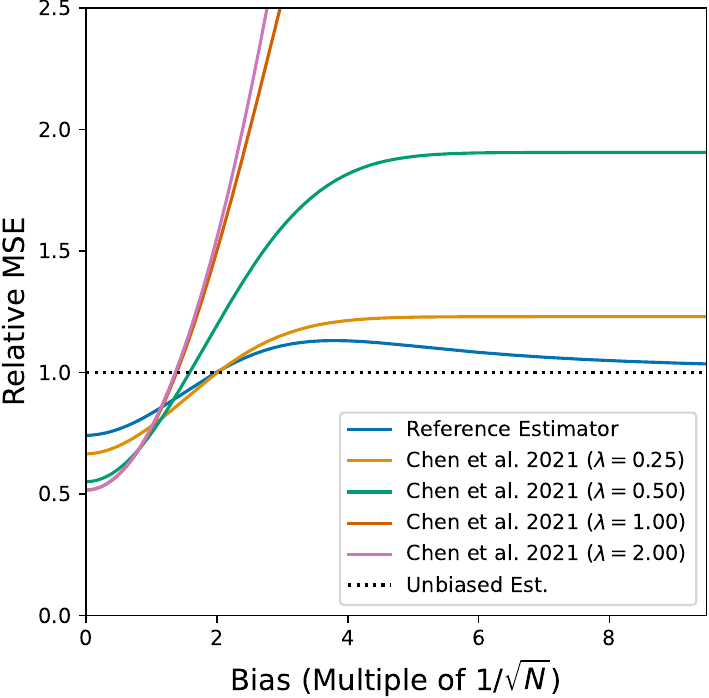}
    \caption{}%
    \label{fig:curves_anchor_app}
  \end{subfigure}
  \begin{subfigure}[t]{0.3\textwidth}
    \centering
    \includegraphics[width=\linewidth, height=\linewidth, keepaspectratio]{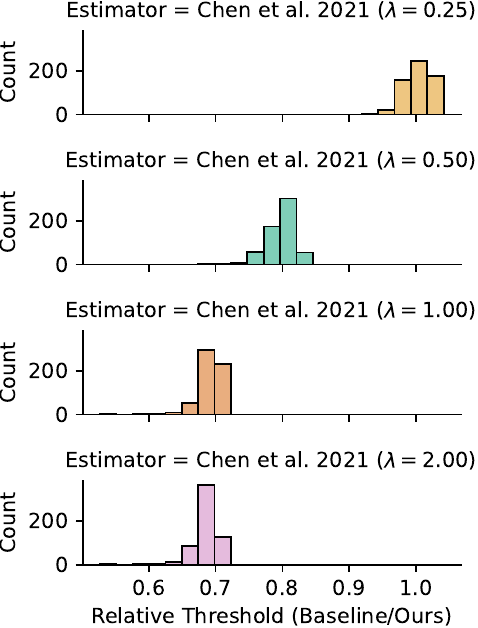}
    \caption{}%
    \label{fig:histogram_anchor_app}
  \end{subfigure}
  \begin{subfigure}[t]{0.3\textwidth}
    \centering
    \includegraphics[width=\linewidth, height=\linewidth, keepaspectratio]{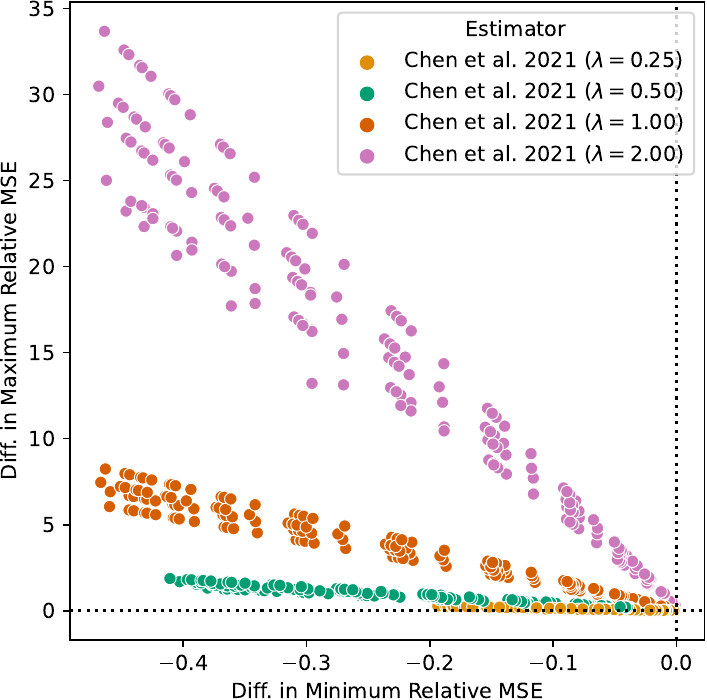}
    \caption{}%
    \label{fig:best_worst_2d_anchor_app}
  \end{subfigure}
  \caption{Comparison against anchored thresholding approach of~\citet{Chen2021-eo}.
      (\subref{fig:curves_anchor_app}) See~\cref{fig:single_example_curve_baseline} for description.
      (\subref{fig:histogram_anchor_app}) Histogram of the \textit{ratio} of bias thresholds (baseline / ours), where a number $<1$ indicates that the given approach has a lower threshold than the reference estimator, and where \enquote{ours} refers to the reference estimator.
      (\subref{fig:best_worst_2d_anchor_app}) See~\cref{fig:best_worst_curve_diff} for description.
  }%
  \label{fig:anchor_app}
\end{figure}

\begin{figure}[t]
  \centering
  \begin{subfigure}[t]{0.3\textwidth}
    \centering
    \includegraphics[width=\linewidth, height=\linewidth, keepaspectratio]{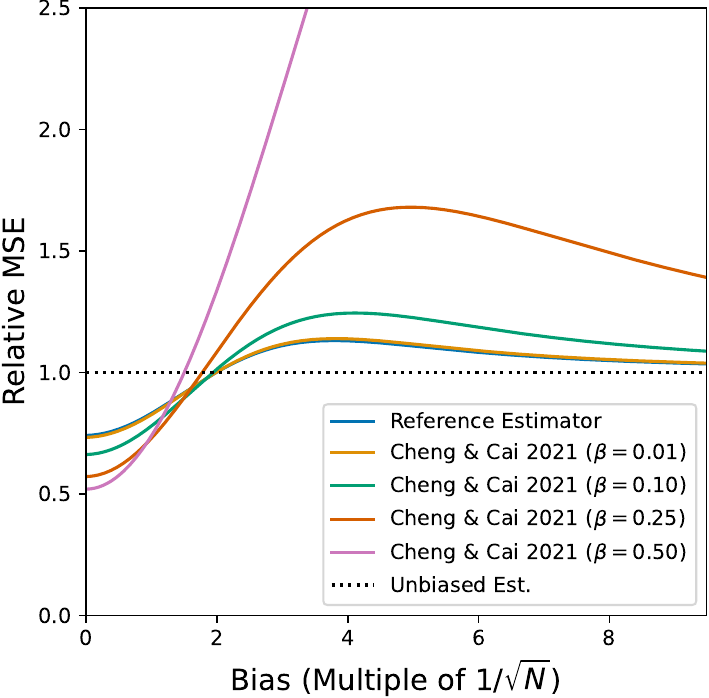}
    \caption{}%
    \label{fig:curves_l2_app}
  \end{subfigure}
  \begin{subfigure}[t]{0.3\textwidth}
    \centering
    \includegraphics[width=\linewidth, height=\linewidth, keepaspectratio]{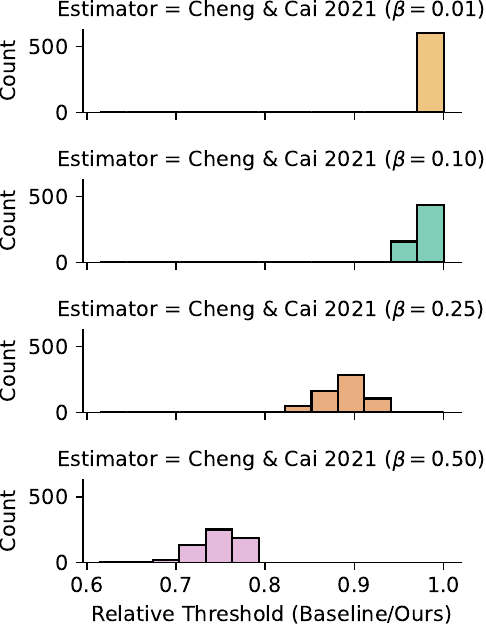}
    \caption{}%
    \label{fig:histogram_l2_app}
  \end{subfigure}
  \begin{subfigure}[t]{0.3\textwidth}
    \centering
    \includegraphics[width=\linewidth, height=\linewidth, keepaspectratio]{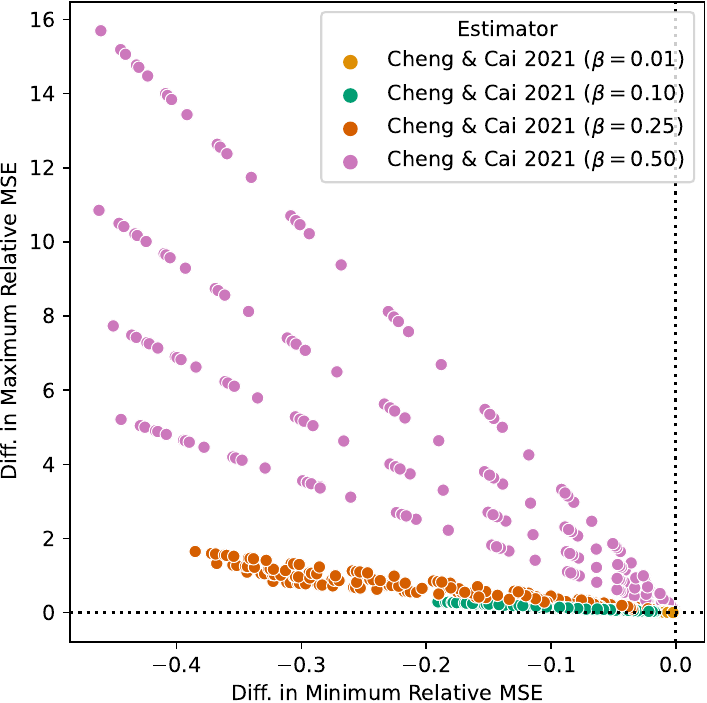}
    \caption{}%
    \label{fig:best_worst_2d_l2_app}
  \end{subfigure}
  \caption{{ Comparison against approach of~\citet{Cheng2021-sn}.
      (\subref{fig:curves_l2_app}) See~\cref{fig:single_example_curve_baseline} for description.
      (\subref{fig:histogram_l2_app}) Histogram of the \textit{ratio} of bias thresholds (baseline / ours), where a number $<1$ indicates that the given approach has a lower threshold than the reference estimator, and where \enquote{ours} refers to the reference estimator.
      (\subref{fig:best_worst_2d_l2_app}) See~\cref{fig:best_worst_curve_diff} for description.
    }
  }%
  \label{fig:l2_app}
\end{figure}

\subsection{Understanding factors that drive performance across settings}%
\label{sec:understanding_factors_that_drive_improvement_and_}

\begin{figure}[t]
\centering
  \begin{subfigure}[t]{0.3\textwidth}
  \centering
  \includegraphics[width=\linewidth]{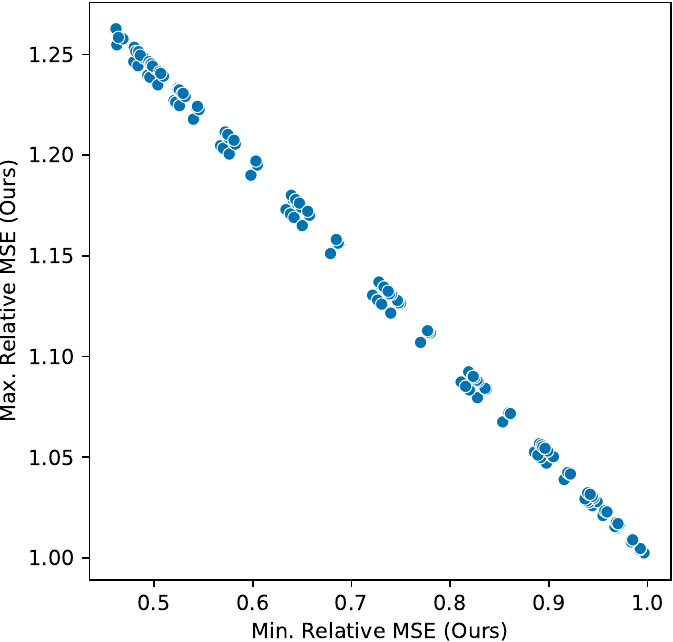}
  \caption{}%
  \label{fig:minmax_trend_all}
  \end{subfigure}
  \begin{subfigure}[t]{0.3\textwidth}
  \centering
   \includegraphics[width=\linewidth]{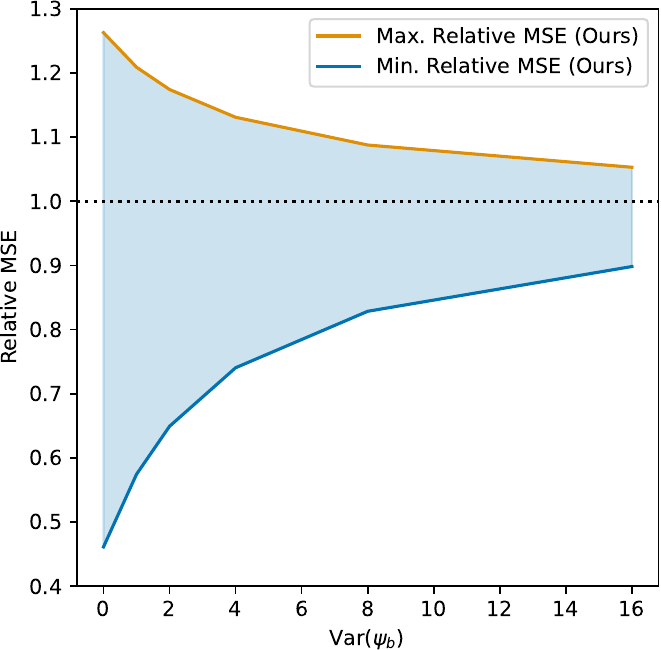}
   \caption{}%
  \label{fig:minmax_trend_indep}
  \end{subfigure}
  \begin{subfigure}[t]{0.3\textwidth}
  \centering
  \includegraphics[width=\linewidth]{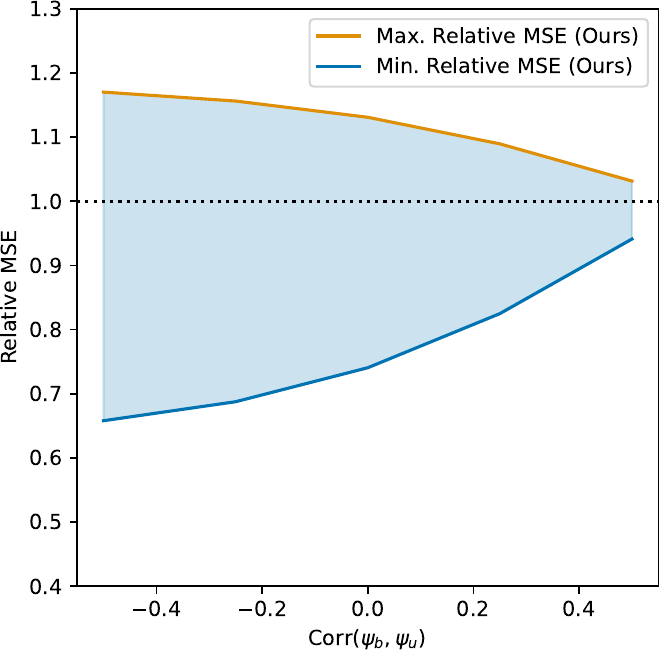}
  \caption{}%
  \label{fig:minmax_trend_corr}
  \end{subfigure}
  \caption{(\subref{fig:minmax_trend_all}) Every simulation setting (in terms of parameters in Table~\ref{tab:simulation_parameter}, excluding $\mu$) corresponds to a different dot, where on the x-axis we plot the best-case relative MSE (when $\mu = 0)$ and on the y-axis we plot the worst-case relative MSE (over all values of $\mu$, with other simulation parameters fixed).  Figures~(\subref{fig:minmax_trend_indep}-\subref{fig:minmax_trend_corr}) show variation in the minimum and maximum relative MSE of the reference estimator $\thlh$ where $n = 1000$, as a function of (\subref{fig:minmax_trend_indep}) variance in $\psib$, where $\var(\psiu) = 4, \text{corr}(\psib, \psiu) = 0$, and (\subref{fig:minmax_trend_corr}) correlation between $\psiu, \psib$ where $\var(\psiu) = 4, \var(\psib) = 4$.}%
  \label{fig:drivers}
\end{figure}

In this section, we take advantage of the wide range of simulation settings in~\cref{sec:empirical} to study how salient characteristics of the performance curves change for the reference estimator, as we vary the distribution $P$ that generates $\tuh, \tbh$.  We list some take-away observations here.

\textbf{The largest opportunities for improvement (e.g., low-variance $\tbh$) also have the highest worst-case error}: In Figure~\ref{fig:minmax_trend_all}, we examine the smallest and largest values of the relative MSE of $\thlh$ for each combination of simulation parameters, and observe that these values exhibit a nearly linear relationship: The larger the potential upside (when $\mu = 0$), the larger the potential downside (when $\mu$ is chosen adversarially). 
In Figures~\ref{fig:minmax_trend_indep} and~\ref{fig:minmax_trend_corr}, we demonstrate that the magnitude of the smallest/largest relative MSE depends on the relative benefit of incorporating $\tbh$:  In Figure~\ref{fig:minmax_trend_indep}, we show that both decrease in magnitude as the variance of $\tbh$ increases, and in Figures~\ref{fig:minmax_trend_corr}, we show that both decrease with increasing positive correlation of $\tbh$ and $\tuh$, and that both increase for more negative correlations, observations consistent with the worst-case bound given in~\Cref{thmcorr:mse_bound_cov} in~\cref{sec:theoretical}.

\textbf{The worst-case relative MSE of $\thlh$ is empirically bounded by a small constant factor:} Across all parameter settings, the largest relative MSE of $\thlh$ is bounded, never exceeding a 27\% increase in MSE over the use of $\tuh$ alone.  Moreover, measured by relative MSE, the potential upside is also larger than the potential downside, across all parameter settings. 

\textbf{How much bias is too much bias?} As shown in Figure~\ref{fig:single_example_curve_baseline}, the reference estimator $\thlh$ only improves upon $\tuh$ for sufficiently small values of the bias $\mu$.
Intuitively, we might expect the bias threshold to occur at some level where $\mu^2$ is of the same order as the variance of the difference $\var(\tuh - \tbh)$. With this intuition in mind, in Figure~\ref{fig:relativeMSE_biasVarianceRatio_pos_neg_corr}, we plot the relative MSE as a function of the ratio $\mu^2 / \var(\tuh - \tbh)$, and observe that when $\tuh, \tbh$ are independent (Figure~\ref{subfig:independent}), the maximum relative bias reliably falls around $\mu^2 \approx 2 \var(\tuh - \tbh)$.  This relationship also seems to hold when $\tuh, \tbh$ are correlated with similar variances.  However, when $\psib$ has a much lower variance than $\psiu$, then the maximum tolerable bias is lower (higher) when the two are positively (negatively) correlated (Figures~\ref{subfig:pos_corr}-\ref{subfig:neg_corr}).

\begin{figure*}[t]
\centering
  \begin{subfigure}[h]{0.3\linewidth}
  \centering
  \includegraphics[width=\linewidth]{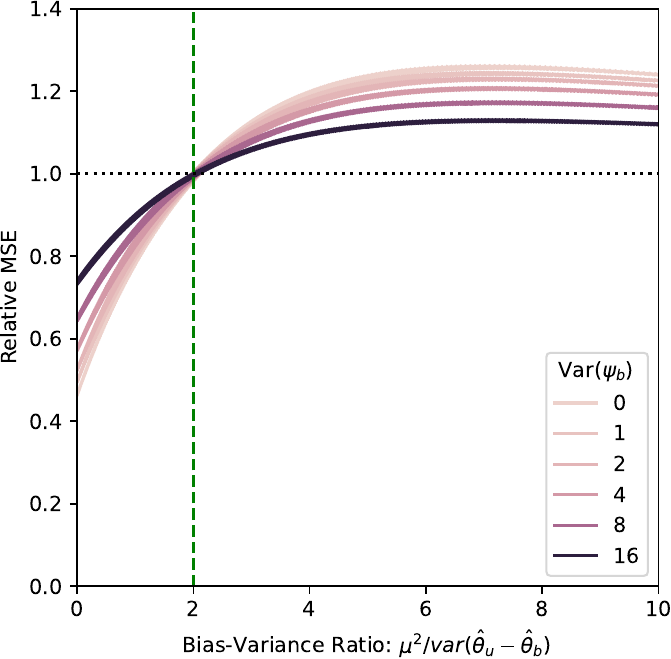}
  \caption{Zero Correlation}%
  \label{subfig:independent}%
  \label{fig:relativeMSE_biasVarianceRatio_indep}
  \end{subfigure}%
  \begin{subfigure}[h]{0.3\linewidth}
  \centering
  \includegraphics[width=\linewidth]{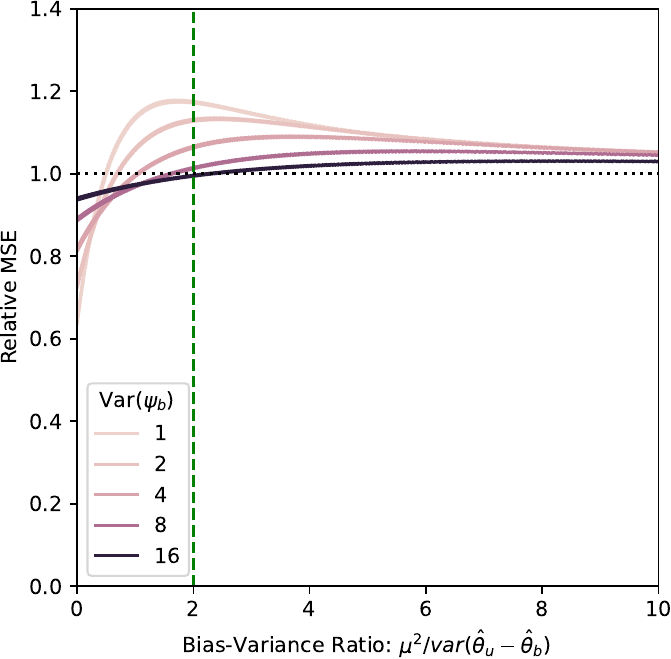}
  \caption{Positive Correlation}%
  \label{subfig:pos_corr}
  \end{subfigure}%
  \begin{subfigure}[h]{0.3\linewidth}
  \centering
  \includegraphics[width=\linewidth]{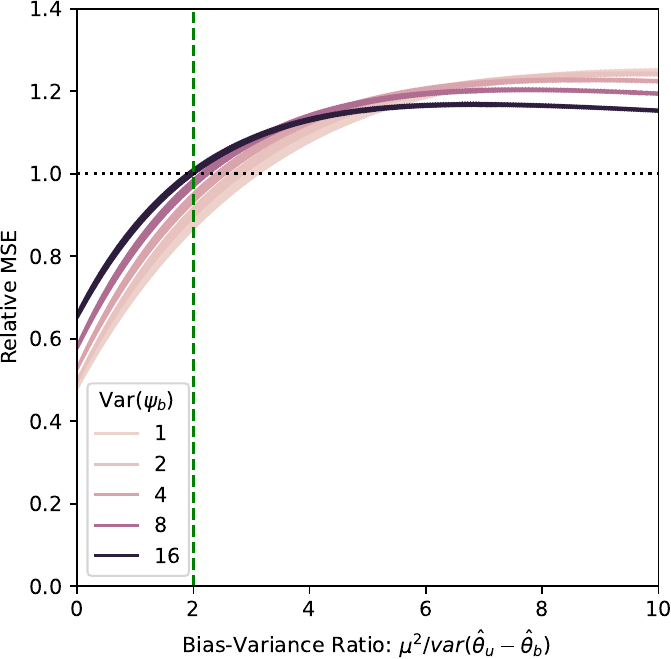}
  \caption{Negative Correlation}%
  \label{subfig:neg_corr}
  \end{subfigure}
  \caption{Relative MSE as a function of the ratio $\mu^2 / \var(\tuh - \tbh)$, across all sample sizes.  The dashed green vertical line denotes $\mu^2 = 2 \var(\tuh - \tbh)$.  (\subref{subfig:independent}) $\text{corr}(\psib, \psiu) = 0$, shown here with $\var(\psiu) = 4$. (\subref{subfig:pos_corr}) Variance fixed at $\var(\psiu) = 16$ and correlation is fixed at $0.5$.  (\subref{subfig:neg_corr}) Same as~\subref{subfig:pos_corr}, but correlation fixed at $-0.5$.  We observe that when the correlation is positive, the maximum tolerable bias is lower, and higher when the correlation is negative, with the gap to predicted threshold determined by the difference in the variance of $\tbh, \tuh$.}%
\label{fig:relativeMSE_biasVarianceRatio_pos_neg_corr}
\end{figure*}

\section{Additional SPRINT details and results}%
\label{sec:additional_sprint_app_section}

\subsection{SPRINT Simulation}%
\label{sec:sprint_simulation_details}

\textbf{Details on Generative Model}: The generative model for potential outcomes in the simulated RCT can be described as follows, consistent with data reported in \citet{SPRINT_Research_Group2015-gb}.
\begin{align*}
  \P(Y_1 = 1 \mid U = 1) &= 0.081 & \P(Y_1 = 1 \mid U = 0) &= 0.040 \\
  \P(Y_0 = 1 \mid U = 1) &= 0.096 & \P(Y_0 = 1 \mid U = 0) &= 0.057 
\end{align*}
Here, the p-value for a heterogeneous treatment effect was not significant ($p=0.32$), but $U = 1$ has a strong marginal association with the primary outcome. The marginal rate of $U$ in the RCT is $1330+1316 / 9361 \approx 28\%$, which we use as our incidence of $U$ across both the simulated RCT and simulated observational study.
\begin{equation*}
  \P(U = 1) = 0.28.
\end{equation*}

\textbf{Details of unbiased/biased estimators}: The estimators $\tuh, \tbh$ are constructed as 
\begin{align*}
  \tuh &= \frac{1}{\nexp} \sum_{i: D_i = E} Y_i \left(\frac{T_i}{\hat{e}_E} - \frac{(1 - T_i)}{1 - \hat{e}_E}\right) & \tbh &= \frac{1}{\nobs} \sum_{i: D_i = O} Y_i \left(\frac{T_i}{\hat{e}_O} - \frac{(1 - T_i)}{1 - \hat{e}_O}\right) 
\end{align*}
where $\hat{e}_d = {(\sum_{i} \1{D_i = d})}^{-1} \sum_{i} T_i \1{D_i = d}$ is an empirical estimate of the treatment probability in dataset $d$. To construct $\thlh$ we estimate $\sigmauhs, \sigmabhs$ by the variance of plug-in estimates of the corresponding influence functions, as described in~\Cref{sec:sprint_simulation_details}. These are used to construct $\thlh$ for each pair of observational and experimental estimators, using 
\begin{equation}
  \lambdah = \frac{\sigmauhs}{{(\tuh - \tbh)}^2 + \sigmauhs + \sigmabhs}.
\end{equation}  
\textbf{Details on Variance Estimation}: The variance of each estimator is estimated as
\begin{align*}
  \sigmauhs &= \frac{1}{\nexp^2} \sum_{i: D_i = E} {\left( (Y_i - \hat{\mu}_E(T_i)) \left(\frac{T_i}{\hat{e}_E} - \frac{(1 - T_i)}{1 - \hat{e}_E}\right) + (\hat{\mu}_E(1) - \hat{\mu}_E(0)) - \tuh \right)}^2 \\
\sigmabhs &= \frac{1}{\nobs^2} \sum_{i: D_i = O} {\left((Y_i - \hat{\mu}_O(T_i)) \cdot \left(\frac{T_i}{\hat{e}_O} - \frac{(1 - T_i)}{1 - \hat{e}_O}\right) + \hat{\mu}_O(1) - \hat{\mu}_O(0) - \tbh \right)}^2
\end{align*}
where $\hat{\mu}_d(t) = {(\sum_{i} \1{D_i = d, T_i = t})}^{-1} \sum_{i} Y_i \1{D_i = d, T_i = t}$ is the empirical mean in treatment arm $t$ in dataset $d$. 

\subsection{Additional SPRINT results}%
\label{sec:additional_sprint_results}

In~\cref{tab:sprint_simulation_gamma_full} we repeat the setup of~\cref{sec:semirealistic_synthetic_experiment} and additionally vary the sample size $\nobs \in \{10000, 20000, 50000, 100000\}$. We observe that the maximum allowable value of $\gamma$ decreases slightly as the sample size increases. 

\begin{table}[t]
  \centering
  \caption{The RMSE of the reference estimator over selected values of $\gamma$. For legibility, the RMSE is multiplied by 1000, on which scale the RMSE of the unbiased estimator is 4.97.  For each sample size we \textbf{bold} the largest value which remains below the RMSE of the unbiased estimator.}%
  \label{tab:sprint_simulation_gamma_full}
  \small
  \begin{tabular}{ccccc}
    \toprule
    $\gamma$ & 10k & 20k & 50k & 100k \\
    \midrule
    0.00 &  4.24 &  3.98 & 3.72 &  3.59 \\
    0.05 &  4.24 &  3.99 & 3.72 &  3.60 \\
    0.10 &  4.24 &  4.00 & 3.74 &  3.62 \\
    0.15 &  4.25 &  4.01 & 3.77 &  3.66 \\
    0.20 &  4.26 &  4.04 & 3.81 &  3.71 \\
    0.25 &  4.28 &  4.07 & 3.86 &  3.77 \\
    0.30 &  4.30 &  4.10 & 3.92 &  3.83 \\
    0.35 &  4.32 &  4.15 & 3.99 &  3.91 \\
    0.40 &  4.35 &  4.19 & 4.06 &  4.00 \\
    0.45 &  4.38 &  4.24 & 4.14 &  4.08 \\
    0.50 &  4.41 &  4.29 & 4.22 &  4.18 \\
    0.55 &  4.44 &  4.35 & 4.30 &  4.27 \\
    0.60 &  4.48 &  4.41 & 4.38 &  4.37 \\
    0.65 &  4.52 &  4.47 & 4.47 &  4.46 \\
    0.70 &  4.55 &  4.53 & 4.55 &  4.55 \\
    0.75 &  4.59 &  4.59 & 4.63 &  4.64 \\
    0.80 &  4.63 &  4.65 & 4.71 &  4.73 \\
    0.85 &  4.67 &  4.70 & 4.78 &  4.81 \\
    0.90 &  4.71 &  4.76 & 4.85 &  4.89 \\
    0.95 &  4.74 &  4.82 & \textbf{4.92} &  \textbf{4.97} \\
    1.00 &  4.78 &  4.87 & 4.99 &  5.04 \\
    1.05 &  4.82 &  4.92 & 5.05 &  5.10 \\
    1.10 &  4.85 &  \textbf{4.97} & 5.10 &  5.16 \\
    1.15 &  4.88 &  5.01 & 5.15 &  5.21 \\
    1.20 &  4.91 &  5.05 & 5.20 &  5.26 \\
    1.25 &  \textbf{4.94} &  5.09 & 5.24 &  5.31 \\
    1.30 &  4.98 &  5.13 & 5.28 &  5.34 \\
    1.35 &  5.00 &  5.16 & 5.31 &  5.38 \\
    1.40 &  5.03 &  5.19 & 5.34 &  5.41 \\
    1.45 &  5.05 &  5.22 & 5.36 &  5.43 \\
    1.50 &  5.07 &  5.24 & 5.39 &  5.45 \\
    1.55 &  5.10 &  5.27 & 5.40 &  5.47 \\
    1.60 &  5.11 &  5.29 & 5.42 &  5.48 \\
    1.65 &  5.13 &  5.30 & 5.43 &  5.49 \\
    1.70 &  5.15 &  5.32 & 5.44 &  5.49 \\
    1.75 &  5.17 &  5.33 & 5.44 &  5.50 \\
    1.80 &  5.18 &  5.34 & 5.45 &  5.50 \\
    1.85 &  5.19 &  5.35 & 5.45 &  5.49 \\
    1.90 &  5.20 &  5.35 & 5.45 &  5.49 \\
    1.95 &  5.21 &  5.36 & 5.45 &  5.49 \\
    2.00 &  5.22 &  5.36 & 5.45 &  5.48 \\
  \bottomrule
  \end{tabular}
\end{table}

\clearpage
\bibliographystyle{icml2022}
\bibliography{references_notes}

\end{document}